\newcommand{\R}{\mathbb{R}}
\newcommand{\floor}[1]{\lfloor #1 \rfloor}
\newcommand{\ceil}[1]{\left\lceil #1 \right\rceil}
\def\01{\{0,1\}}
\def\mp{\{-1,1\}}
\newcommand{\sn}[1]{\|#1\|}
\newcommand{\Th}{\mathrm{Th}}
\newcommand{\OR}{\mathrm{OR}}
\newcommand{\AND}{\mathrm{AND}}
\newcommand{\Ex}{\mathbb{E}}
\newcommand{\DISJ}{\mathrm{DISJ}}
\newcommand{\NOF}{number-on-the-forehead }
\newcommand{\LS}{Lov{\'a}sz-Schrijver }
\newcommand{\braket}[2]{\langle#1, #2\rangle}
\newcommand{\disc}{\mathrm{disc}}
\newcommand{\ACC}{\mathrm{ACC}}
\newcommand{\AC}{\mathrm{AC}}
\newcommand{\size}{\mathrm{size}}
\newcommand{\ignore}[1]{}
\newtheorem{theorem}{Theorem}
\newtheorem{lemma}[theorem]{Lemma}
\newtheorem{corollary}[theorem]{Corollary}
\newtheorem{definition}[theorem]{Definition}
\newtheorem{remark}[theorem]{Remark}
\newtheorem{fact}[theorem]{Fact}
\newcommand{\thmref}[1]{\hyperref[#1]{{Theorem~\reff{#1}}}}
\newcommand{\lemref}[1]{\hyperref[#1]{{Lemma~\reff{#1}}}}
\newcommand{\corref}[1]{\hyperref[#1]{{Corollary~\reff{#1}}}}
\newcommand{\eqnref}[1]{\hyperref[#1]{{Equation~(\reff{#1})}}}
\newcommand{\factref}[1]{\hyperref[#1]{{Fact~\reff{#1}}}}
\newcommand{\defref}[1]{\hyperref[#1]{{Definition~\reff{#1}}}}
\newcommand{\secref}[1]{\hyperref[#1]{{Section~\reff{#1}}}}
\newenvironment{proof}
{\noindent {\bf Proof:}}
{{\hfill $\Box$}\\
 \smallskip}
\begin{document}
\title{Disjointness is hard in the multiparty number-on-the-forehead model}
\author{Troy Lee \\ Department of Computer Science \\ Columbia University
\thanks{Work conducted at Rutgers University, supported by a NSF Mathematical Sciences
Postdoctoral Fellowship.  Email: troyjlee@gmail.com}
\and Adi Shraibman \\ Department of Mathematics \\ Weizmann Institute of Science
\thanks{Email: adi.shraibman@weizmann.ac.il}}
\date{}
\maketitle

\begin{abstract}
We show that disjointness requires randomized communication
$\Omega \left(\frac{n^{1/(k+1)}}{2^{2^{k}}}\right)$ in the general
$k$-party number-on-the-forehead model of complexity.  The
previous best lower bound for $k \ge 3$ was
$\frac{\log n}{k-1}$ .  Our results give a separation between nondeterministic and
randomized multiparty \NOF communication complexity for up to
$k=\log \log n - O(\log \log \log n)$ many players.
Also by a reduction of Beame, Pitassi, and Segerlind,
these results imply subexponential lower bounds on the size of proofs needed
to refute certain unsatisfiable CNFs in a broad class of proof
systems, including tree-like \LS proofs.
\end{abstract}

\section{Introduction}
Since its introduction thirty years ago \cite{Abel, Yao79}, communication
complexity has become a key concept in complexity theory and
theoretical computer science in general. Part of its appeal is
that it has applications to many different computational models,
for example to formula size and circuit depth, proof complexity,
branching programs, VLSI design, and time-space trade-offs for
Turing machines (see \cite{KN97} for more details).

One area of communication complexity which still holds many
mysteries is the $k$-party ``number-on-the-forehead'' model,
originally introduced by Chandra, Furst, and Lipton \cite{CFL83}.
In this model, $k$ parties wish to compute a function $f :
(\{-1,+1\}^n)^k \rightarrow \mp$.  On input $(x_1,\ldots,x_k)$,
the $i^{th}$ player receives $(x_1, \ldots, x_{i-1}, x_{i+1},
\ldots, x_k)$. That is, player $i$ has knowledge of the entire
input {\em except} for the string $x_i$, which figuratively can be
thought of as sitting on his forehead. The players communicate by
writing messages ``on a blackboard,'' so that all players see each
message. The large overlap in the player's knowledge is part of
what makes showing lower bounds in this model so difficult. This
difficulty, however, is rewarded by the richness and strength of
consequences of such lower bounds: for example, by results of
\cite{HG91, BT94}, showing a super-polylogarithmic lower bound on
an explicit function for polylogarithmic many players would give
an explicit function outside of the class $\ACC^0$ --- that is, a
function which requires super-polynomial size constant-depth
circuits using AND, OR, NOT, and modulo $m$ gates.

While showing such bounds remains a challenging open problem, we
do know of explicit functions which require large communication in
this model for $\Theta(\log n)$ many players. Babai,
Nisan, and Szegedy \cite{BNS89} showed that the inner product
function generalized to $k$-parties requires randomized
communication $\Omega(n/4^k)$, and for other explicit functions
slightly larger bounds of size $\Omega(n/2^k)$ are known
\cite{FG05}. These lower bounds are all achieved using the
discrepancy method, a very general technique which gives lower
bounds even on randomized models with error probability close to
$1/2$, and also on nondeterministic communication complexity.

For some basic functions, however, there is a huge gap in our knowledge. One
example is the disjointness function, or equivalently its
complement, set intersection.  In the set intersection problem,
the goal of the players is to determine if there is an index $j$
such that every string $x_i$ has a $-1$ in position $j$, where here and throughout
the paper we interpret $-1$ as `true.'  The best
known protocol has cost $O(k^2 n \log(n)/2^k)$ \cite{Gro94}.  On the
other hand, the best lower bound in the general \NOF model
is $\tfrac{\log n}{k-1}$, for $k \ge 3$ \cite{Tes02, BPSW06}.
For $k=2$ tight bounds are known of $\Theta(n)$ for randomized communication
complexity \cite{KS87} and $\Theta(\sqrt{n})$ for quantum
communication complexity \cite{Raz03, AA05}.

A major obstacle toward proving better lower bounds on set
intersection is that it has a low cost nondeterministic
protocol.  In case there is a position where all players have a $-1$, with 
$O(\log n)$ bits a prover can send the name of this position
and the players can then verify
this is the case.  Since the discrepancy method is also a lower bound on
nondeterministic complexity, it is limited to logarithmic
lower bounds for set intersection.  Even in the two-party case,
determining the complexity of set intersection in the randomized
and quantum models was a long-standing open problem, in part for this reason.

In the multiparty case, the discrepancy method is the only technique
which has been used to show lower bounds on the general randomized model of
\NOF complexity.  Although other two-party methods can be generalized to the
multiparty \NOF model, they can become very difficult to handle.
One source of this difficulty is that, whereas in the two party case we can
nicely represent the function $f(x,y)$ as a matrix, in the
multiparty case we deal with higher dimensional tensors.  This
makes many of the linear algebraic tools so useful in the two-party case
inapplicable or at least much more involved.  For example, while matrix
rank is a staple lower bound technique for deterministic two-party complexity,
in the tensor case even basic questions like the maximum rank of a
$n \times n \times n$ tensor remain open.

Besides this technical challenge, additional motivation
to studying the \NOF complexity of disjointness was given by Beame, Pitassi,
and Segerlind \cite{BPS06}, who showed that lower bounds on disjointness imply
lower bounds on a very general class of proof systems, including
cutting planes and \LS proof systems.

We show that disjointness requires randomized communication
$\Omega \left(\frac{n^{1/(k+1)}}{2^{2^k}}\right)$ in the general
$k$-party \NOF model.  This separates nondeterministic and randomized
multiparty \NOF complexity for up to $k=\log \log n - O(\log \log \log n)$ many
players.  Also by the work of \cite{BPS06} this implies subexponential lower
bounds on the size of proofs needed to refute certain unsatisfiable formulas
by tree-like proofs in \LS and more powerful proof systems.

Chattopadhyay and Ada \cite{CA08} have independently obtained similar
bounds on disjointness using similar techniques.

\subsection{Related work}
For restricted models of computation, bounds are known which are
stronger than ours.  Wigderson showed that for one-way three-party
\NOF protocols, disjointness requires communication
$\Omega(n^{1/2})$ (this result appears in \cite{BHK01}). More
recently, Viola and Wigderson \cite{VW07b} extended this approach
to show a bound of  $\Omega(n^{1/(k-1)}/k^{O(k)})$ on the
complexity of one-way $k$-party protocols computing disjointness.
These results actually show bounds on a pointer jumping function
which reduces to disjointness.

Beame, Pitassi, Segerlind, and Wigderson \cite{BPSW06} devised a
method based on a direct product theorem to show a
$\Omega(n^{1/3})$ bound on the complexity of three-party
disjointness in a model stronger than one-way where the first
player speaks once, and then the two remaining players interact
arbitrarily.

Following up on our work, David, Pitassi, and Viola \cite{DPV08}
gave an explicit function which separates nondeterministic and
randomized \NOF communication complexity for up to $\Omega(\log
n)$ players.  They are also able, for any constant $c$ to give a
function computable in $\AC^0$ which separates them for up to $c
\log \log n$ players.  Note that disjointness can be computed in
$\AC^0$, but that our bounds are already trivial for $\log \log n$
players.  Even more recently, Beame and Huynh-Ngoc \cite{BHN08}
have shown a bound of $2^{\Omega(\sqrt{\log n}/\sqrt{k})-k}$ on the
$k$-party \NOF complexity of disjointness.  This bound remains
non-trivial for up to $\Theta(\log^{1/3} n)$ many players, but is
not as strong as our bound for few players.

\subsection{Overview of techniques}
There is a natural correspondence between functions $f :
(\{-1,+1\}^n)^k \rightarrow \mp$ and sign $k$-tensors. Sometimes
it is more convenient to consider the function form, and sometimes,
like when discussing norms, it is more convenient to consider
tensors.

Our proof combines two ingredients.  The first of these is the
notion of an approximation norm.  For a norm $\Phi$, and a sign
tensor $A$, the {\em approximation norm} associated to $\Phi$ and
$A$, denoted $\Phi^\alpha(A)$, is the smallest $\Phi$ norm of an
element `close' to $A$.  Here $\alpha$ quantifies the term
`close.'

Approximation norms turn out to be quite useful for showing lower bounds
on randomized and quantum communication complexity \cite{Kla01, Raz03, LS07}.
Razborov, for example, uses the approximation trace norm to
prove a tight lower bound on the quantum communication complexity of set
intersection.

We use what we call the {\em cylinder
intersection norm}, denoted $\mu$.
This norm can be seen as a multiparty generalization of a quantity
used in Lemma~3.1 of Klauck \cite{Kla01}.  As a correct deterministic protocol
partitions the communication matrix into rectangles
on which the function is constant, analogously a correct
deterministic \NOF protocol decomposes the
communication tensor into cylinder intersections on which the
function is constant.  Roughly speaking, $\mu(A)$ measures how
efficiently $A$ can be written as a sum of cylinder intersections.  In this way,
if $A$ has low communication complexity, it will also have low $\mu$ norm.
We defer formal definitions to \secref{sec:the_method}.

We denote the approximate version of the cylinder intersection
norm by $\mu^{\alpha}$ where $1 \le \alpha < \infty$ represents
the measure of approximation. This measure provides a lower bound on
randomized communication complexity in the \NOF model.
The limiting case $\mu^\infty(A)$ turns out to be exactly the
usual discrepancy method.  For bounded $\alpha$ we obtain a
technique which is strictly stronger than the discrepancy
method.

Following \cite{LMSS07, LS07}, to show lower bounds on $\mu^\alpha(A)$, 
we write it in terms of the dual norm $\mu^*$.  By definition of a dual norm, we have
\begin{equation}
\mu(B)=\max_Q \frac{\braket{B}{Q}}{\mu^*(Q)}.
\label{dual}
\end{equation}
This ``max'' formulation of $\mu$ is often more convenient for showing 
lower bounds.  The dual norm $\mu^*$ is closely related to discrepancy 
with respect to the uniform distribution, so we can use existing techniques 
to upper bound $\mu^*(Q)$.  

This formulation of $\mu$ also gives a way to write $\mu^\alpha$ 
in terms of a maximization quantity.
\begin{equation}
\mu^\alpha(A)=\max_Q
\frac{(1+\alpha)|\braket{A}{Q}|+(1-\alpha)\|Q\|_1}{2 \mu^*(Q)}.
\label{approx_dual}
\end{equation}
All one needs for showing lower bounds is that the left hand side is at 
least as large as the right hand side.  This can be shown quite simply 
using Equation~\ref{dual} and elementary inequalities and was noted, for 
example, by Razborov in the context of the approximation trace norm.  The fact 
that equality holds here requires the use of linear programming 
duality or a separation theorem for convex bodies and seems to 
be less well known.  

As the dual norm $\mu^*$ is essentially discrepancy with respect
to the uniform distribution, the approximation $\mu$ norm can
be seen as an extension of discrepancy in another way.
Instead of proving that the tensor of interest $A$ has small discrepancy, 
it is enough to prove that there is a tensor $Q$ which has small discrepancy
and has large correlation with $A$, relative to $\|Q\|_1$.  This is why this 
method is called {\em generalized discrepancy} in \cite{CA08}.  

To find a good witness tensor $Q$, we use ideas from a second line
of research.  While the norm framework of
\eqnref{approx_dual} provides a nice approach to lower bound
communication complexity, it gives no hint about how to choose a
good witness $Q$---in general a difficult problem. Works by
Sherstov \cite{She07, She08a} and Shi and Zhu \cite{SZ07} in the
two-party case, and Chattopadhyay \cite{Cha07} in the multiparty
case provide an elegant way to choose a good witness for a general
class of matrices and tensors.  These works look at block composed 
functions of the form $f \circ g^n(x_1, \ldots, x_k)=
f(g(x_1^1, \ldots, x_k^1), \ldots, g(x_1^n, \ldots, x_k^n))$.
Notice that set intersection is a block composed function where $f=\OR_n$ is the 
OR function on $n$ bits
and $g=\AND_k$ is the $k$-player AND function on one bit.  Sherstov \cite{She07}
first showed that when $g(x,i)=x_i$, the discrepancy of a block 
composed function could be bounded in terms of the threshold degree of $f$, the 
minimum degree of a polynomial which agrees in sign with $f$ on the Boolean cube.  
Building on this result, Chattopadhyay showed an analogous statement in the 
\NOF case for an appropriately generalized multiparty function $g$.  

Sherstov and independently Shi-Zhu showed that the approximate trace norm 
of a block composed function could be lower bounded in terms of the approximate
degree of $f$, again provided that the inner function $g$ satisfies certain 
technical conditions.  The $\mu$ norm provides bounds at least as large as the 
trace norm method \cite{LS07}, thus these works also lower bound $\mu^\alpha$.  
In this paper, we take the natural step to show that $\mu^\alpha$ of 
a block composed multiparty function can be lower bounded in terms of the approximate 
degree of $f$, for a particular multiparty inner function $g$ such that the composed function
$f \circ g^n$ can be embedded in the set intersection problem.    

\subsection{Consequences for \LS proof systems and beyond}
\label{sec:bps} 
Beame, Pitassi, and Segerlind \cite{BPS06} show that bounds on
multiparty disjointness imply strong lower bounds on the size of refutations
of certain unsatisfiable formulas, for a very general class of
proof systems. We now introduce and motivate the study of these
proof systems. Formal definitions and the implications of our
results will be given in \secref{sec:proof_systems}.

The fact that linear and semidefinite programs can be solved with high precision 
in polynomial time is a remarkable algorithmic achievment.  It is
thus interesting to ask how these algorithms fare when pitted against
NP-complete problems.  For many NP-complete problems, there is a
very natural approach to solving them via linear or semidefinite
programming: namely, we first formulate the problem as optimizing
a convex function over the Boolean cube, i.e. with variables
subject to the quadratic constraints $x_i^2=x_i$.  We then relax
these quadratic constraints to linear or semidefinite constraints
to obtain a program which can be solved in polynomial time.  For
example, a linear relaxation of $x_i^2=x_i$ may simply be the
constraint $0 \le x_i \le 1$.  In the case of vertex cover, for
example, such a simple relaxation already gives a linear program
with approximation ratio of 2.  Semidefinite constraints are in
general more complicated, but there are several ``automatic'' ways
of generating valid semidefinite inequalities---that is,
semidefinite inequalities satisfied by all Boolean solutions of
the original problem.  Perhaps the best known of these is the \LS
``lift and project'' method \cite{LS91}.  The seminal
$0.878$-approximation algorithm for MAXCUT of Goemans and
Williamson \cite{GW95} can be obtained by relaxing the natural
Boolean programming problem with semidefinite constraints obtained
by one application of the \LS method.

As these techniques have given impressive results in approximation
algorithms, it is natural to ask if they can also be used to
efficiently obtain exact solutions.  Namely, how many inequalities
need to be added in general until all fractional optima are
eliminated and only true Boolean optima remain?

One way to address this question is to consider proof systems with
derivation rules based on linear programming or the \LS method.
Our particular application will look at the size of proofs needed
to refute unsatisfiable formulas.  Given a CNF $\phi$,  we can
naturally represent the satisfiability of $\phi$ as the
satisfiability of a system of linear inequalities, one for each
clause. For example, the clause $x_1 \vee x_4 \vee \neg x_5$ would
be represented as $x_1+x_4+(1-x_5) \ge 1$.  Suppose that $\phi$ is
unsatisfiable.  Then consider a proof system in which the
``axioms'' are the inequalities obtained from the clauses of
$\phi$, and the goal is to derive the contradiction $0 \ge 1$.  By
the results of \cite{BPS06}, our results on disjointness imply
that there are unsatisfiable formulas such that any refutation
obtained by generating new inequalities by the \LS method in a
``tree-like'' way requires size $2^{n^{\Omega(1)}}$.  For a
standard formulation of the \LS method known as $\mathrm{LS}_+$,
bounds of size $2^{\Omega(n)}$ for tree-like proofs have already
been shown by very different methods \cite{IK06}.

The advantage of the \NOF communication complexity approach,
however, is that it can also be applied to much more powerful
proof systems which are currently untouchable by other methods.
Beame, Pitassi, and Segerlind \cite{BPS06} show that  lower bounds
on $k$-party communication complexity of disjointness give lower
bounds on the size of tree-like proofs of certain unsatisfiable
CNFs $\phi(x)$, where the derivation rule is as follows: from
inequalities $f,g$ of degree $k-1$ in $x$, we are allowed to
conclude a degree $k-1$ inequality $h$ if every Boolean assignment
to $x$ which satisfies $f$ and $g$ also satisfies $h$. \LS proof
systems are a special case of such degree-2 systems.  Our bounds
on disjointness imply the existence of unsatisfiable formulas
whose refutation requires subexponential size tree-like degree-$k$
proofs, for any constant $k$. \footnote{The conference version of
this paper reported bounds on degree-$k$ proof systems for up to
$k=\log \log n -O(\log \log \log n)$.  As pointed out to us by
Paul Beame, however, this is not justified by the reduction of
\cite{BPS06}, which requires certain constraints on the size of
$k$.}  The aforementioned lower bounds on $\mathrm{LS}_+$ proof
systems strongly rely on specific properties of the \LS
operator---showing superpolynomial bounds on the size of tree-like
proofs in the more general degree-$k$ model was previously open
even in the case $k=2$.

\section{Preliminaries and notation}
\label{sec:prelim:not}

We let $[n]=\{1, \ldots, n\}$. For multiparty communication
complexity it is convenient to work with tensors, the
generalization of matrices to higher dimensions.  If an element of
a tensor $A$ is specified by $k$ indices, we say that $A$ is a $k$-tensor.
For a $k$-tensor $A$ of dimensions $(n_1,
\ldots, n_k)$ we write $\size(A)=n_1\cdots n_k$.  A tensor for which
all entries are in $\mp$ we call a sign tensor. For a function $f:
X_1 \times \ldots \times X_k \rightarrow \mp$, we define the
communication tensor corresponding to $f$ to be a $k$-tensor
$A_f$ where $A_f[x_1, \ldots, x_k]=f(x_1, \ldots, x_k)$.  We
identify $f$ with its communication tensor. For a set $Z \subseteq
X_1 \times \ldots \times X_k$ we let $\chi(Z)$ be its
characteristic tensor where $\chi(Z)[x_1, \ldots, x_k]=1$ if
$(x_1, \ldots, x_k) \in Z$ and is $0$ otherwise.

For a sign tensor $A$, we denote by $D^k(A)$ the deterministic
communication complexity of $A$ in the $k$-party
number-on-the-forehead model. The public coin randomized
communication complexity with error bound $\epsilon \ge 0$ is
denoted $R_{\epsilon}^k(A)$. We drop the superscript when the
number of players is clear from context.

We use the shorthand $A \ge c$ to indicate that all of the entries
of $A$ are at least $c$. The Hadamard or entrywise product of two
tensors $A$ and $B$ is denoted by $A \circ B$.  Their inner
product is denoted $\braket{A}{B}=\sum_{x_1, \ldots, x_k} A[x_1,
\ldots, x_k] B[x_1, \ldots, x_k]$. The $\ell_1$ and
$\ell_{\infty}$ norms of a tensor $A$ are $\|A\|_1=\sum_{x_1,
\ldots, x_k} |A[x_1, \ldots, x_k]|$ and $\|A\|_{\infty}=\max_{x_1,
\ldots, x_k} |A[x_1, \ldots, x_k]|$, respectively.

We also need some basic elements of Fourier analysis.  For $S
\subseteq [n]$ we define $\chi_S : \{-1,+1\}^n \rightarrow \mp$ as
$\chi_S(x)=\prod_{i \in S} x_i$.  As the $\chi_S$ form an
orthogonal basis, for any function $f:\{-1,+1\}^n \rightarrow \R$
we have a unique representation
$$
f(x)=\sum_{S \subseteq n} \hat f(S) \chi_S(x)
$$
where $\hat f(S)=(1/2^n)\braket{f}{\chi_S}$, are the Fourier
coefficients of $f$. The degree of $f$ is the size of the largest
set $S$ for which $\hat f(S)$ is nonzero.

\section{The Method}
\label{sec:the_method}

In this section we present a method for proving lower bounds on
randomized communication complexity in the number-on-the-forehead
model that generalizes and significantly strengthens the
discrepancy method.

\subsection{Cylinder intersection norm}
\label{sec:cylinder_norm}
 In two-party communication complexity, a key role is played by combinatorial rectangles---subsets
 of the form $Z_1 \times Z_2$ where $Z_1$ is a subset of inputs to Alice and $Z_2$ is a subset of
 inputs to Bob.  The analogous concept in the \NOF model of multiparty
communication complexity is that of a cylinder intersection.

\begin{definition}[Cylinder intersection]
A subset $Z_i \subseteq X_1 \times \ldots \times X_k$ is called a cylinder in the $i^{th}$ dimension
if membership in $Z_i$ does not depend on the $i^{th}$ coordinate.  That is,
for every $(z_1, \ldots, z_i, \ldots, z_k) \in Z_i$ and $z_i' \in X_i$ it also holds that
$(z_1, \ldots, z_i', \ldots, z_k) \in Z_i$.  A set $Z$ is called a cylinder intersection if it can be
expressed as $Z=\cap_{i=1}^k Z_i$ where each $Z_i$ is a cylinder in the $i^{th}$ dimension.
\end{definition}

Cylinder intersections are important because a correct
deterministic \NOF protocol for a function $f$ partitions the
corresponding communication tensor into cylinder intersections,
each of which is monochromatic with respect to the function $f$.

\begin{fact}
\label{fact:partition_det} Let $A$ be a sign $k$-tensor, and
suppose that $D^k(A) \le c$. Then there are cylinder intersections
$Z_1, \ldots, Z_{2^c}$ such that
$$
A=\sum_{i=1}^{2^c} \alpha_i \chi(Z_i)
$$
where $\alpha_i \in \{-1,+1\}$.
\end{fact}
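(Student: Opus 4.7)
The plan is to read off the cylinder intersections directly from the protocol tree of an optimal deterministic protocol, using the standard observation that a message bit sent by player~$i$ depends only on the inputs they see, i.e.\ on all coordinates except $x_i$.

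First, I fix a deterministic NOF protocol for $A$ of cost at most $c$, and consider its full binary protocol tree $T$. Since the protocol uses at most $c$ bits of communication, $T$ has at most $2^c$ leaves. Each leaf $\ell$ is reached on exactly the set of inputs $Z^\ell \subseteq X_1 \times \cdots \times X_k$ producing the transcript labeling the path from the root to $\ell$, and because the protocol is correct and deterministic, $A$ is constant on $Z^\ell$; let $\alpha_\ell \in \mp$ be this constant value. Then
\[
A \;=\; \sum_{\ell} \alpha_\ell\, \chi(Z^\ell),
\]
so the only remaining content is to show that each $Z^\ell$ is a cylinder intersection.

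For this, I would group the bits along the path to $\ell$ by which player sends them. Fix a leaf $\ell$ with transcript $\tau = b_1 b_2 \cdots b_t$. For each player $i \in [k]$, let $S_i$ be the set of positions in $\tau$ at which player $i$ speaks, and define
\[
Z_i^\ell \;=\; \bigl\{\,(x_1, \ldots, x_k) : \text{for every } j \in S_i, \text{ player } i\text{'s } j\text{-th message on input is } b_j\,\bigr\}.
\]
The defining condition on $(x_1,\ldots,x_k)$ depends only on the bits that player $i$ computes while running the protocol, and player $i$'s messages are a function solely of the prior transcript (which is already fixed to a prefix of $\tau$ on $Z_i^\ell$) and the coordinates $x_1,\ldots,x_{i-1},x_{i+1},\ldots,x_k$ on their forehead view; in particular, membership in $Z_i^\ell$ is independent of $x_i$. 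Thus $Z_i^\ell$ is a cylinder in the $i$-th direction, and by construction $Z^\ell = \bigcap_{i=1}^k Z_i^\ell$ is a cylinder intersection.

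There is no real obstacle here; the only subtlety is the bookkeeping that justifies the ``independent of $x_i$'' claim, which is immediate from the NOF convention that player $i$ never sees $x_i$. Finally, if the number of leaves is strictly less than $2^c$, I pad the sum with empty cylinder intersections $Z^\ell = \emptyset$ (for which $\chi(Z^\ell) = 0$) and arbitrary signs $\alpha_\ell \in \mp$ to reach exactly $2^c$ terms, matching the statement of the fact.
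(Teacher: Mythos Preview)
Your argument is correct and is exactly the standard proof; the paper does not give a proof of this fact, merely remarking beforehand that a correct deterministic NOF protocol partitions the communication tensor into monochromatic cylinder intersections. One small point of precision: your displayed definition of $Z_i^\ell$ should be read hypothetically---player $i$ would send $b_j$ at step $j$ \emph{given} the transcript prefix $b_1\cdots b_{j-1}$---rather than as a condition on the actual run of the protocol on $(x_1,\ldots,x_k)$, since under the latter reading the prior transcript (and hence player $i$'s message) could depend on $x_i$ through other players' earlier messages. Your parenthetical explanation shows you intend the hypothetical reading, under which $Z_i^\ell$ is manifestly a cylinder in the $i$th direction and the inductive verification of $Z^\ell = \bigcap_i Z_i^\ell$ goes through.
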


Our main object of study, termed the cylinder intersection norm, relaxes this notion of
decomposition to allow $\alpha_i \in \R$.  A similar such relaxation is done by
\cite{KKN95} in the context of nondeterministic communication complexity.
\paragraph{Cylinder intersection norm}
We denote by $\mu$ the norm induced by the absolute convex hull of
the characteristic functions of all cylinder intersections.  That
is, for a $k$-tensor $B$
$$
\mu(B)=\min \left\{ \sum_{i} |\alpha_i| : B=\sum_i \alpha_i
\chi(Z_i), \alpha_i \in \R \right\}
$$
where each $Z_i$ is a cylinder intersection and $\chi(Z_i)$ is its characteristic tensor.

In the two dimensional case, $\mu$ is very closely related to the
$\gamma_2$ norm \cite{LMSS07, LS07}.  Indeed, for matrices $B$ we
have $\mu(B)=\Theta(\gamma_2(B))$.

\begin{remark}
In our definition of $\mu$ above we chose to take
$\chi(Z_i)$ as $\01$ tensors.  One can alternatively take them to
be $\pm 1$ valued tensors---a form which is sometimes easier to bound---without
changing much.  One can show
\[
\mu(B) \ge \mu_{\pm 1}(B) \ge 2^{-k}\mu(B).
\]
where $B$ is a $k$-tensor and $\mu_{\pm 1}(B)$ is defined as above
with $\chi(Z_i)$ taking values from $\mp$.  In the matrix case,
$\mu_{\pm}$ is also known as the nuclear norm \cite{Jam87}.
\end{remark}
By Fact~\ref{fact:partition_det} we have the following.
\begin{theorem}
\label{th:detcc_mu} It holds that $D^k(A) \ge \log(\mu(A))$ for
every sign $k$-tensor $A$.
\end{theorem}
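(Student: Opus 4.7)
The plan is to derive this as an almost immediate consequence of Fact~\ref{fact:partition_det} together with the definition of $\mu$. Suppose $D^k(A) \le c$, so that $c = D^k(A)$ (or any upper bound on it) gives us, via Fact~\ref{fact:partition_det}, a decomposition
\[
A = \sum_{i=1}^{2^c} \alpha_i \, \chi(Z_i),
\]
where the $Z_i$ are cylinder intersections and $\alpha_i \in \{-1, +1\}$.

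The next step is to observe that this is a particular decomposition of the kind over which $\mu(A)$ takes its minimum. Since each $|\alpha_i| = 1$ and there are $2^c$ terms, the sum of absolute values of coefficients in this specific decomposition is exactly $2^c$. By the definition of $\mu$ as the minimum of $\sum_i |\alpha_i|$ over all valid representations of $A$ as a real combination of characteristic tensors of cylinder intersections, we conclude $\mu(A) \le 2^c = 2^{D^k(A)}$. Taking logarithms yields $\log \mu(A) \le D^k(A)$, as required.

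I do not foresee any real obstacle here: the only thing to be careful about is to check that the decomposition delivered by Fact~\ref{fact:partition_det} is literally of the form allowed in the definition of $\mu$ (i.e., uses $\{0,1\}$-valued characteristic tensors $\chi(Z_i)$ with real coefficients $\alpha_i$), which it is since $\{-1,+1\} \subset \R$. One could alternatively phrase the argument contrapositively: assuming $\mu(A) > 2^c$, no partition into $2^c$ cylinder intersections with $\pm 1$ labels can reproduce $A$, so $D^k(A) > c$. Either phrasing is a one-paragraph proof and no delicate estimate is needed.
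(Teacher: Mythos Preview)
Your proof is correct and is exactly the argument the paper has in mind: the paper itself does not spell out a proof, merely stating that the theorem follows from Fact~\ref{fact:partition_det}, and your paragraph fills in precisely those details.
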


A public coin randomized protocol is simply a probability distribution over deterministic protocols.
This gives us the following fact:
\begin{fact}
A sign $k$-tensor $A$ satisfies $R^k_{\epsilon}(A) \le c$ if and
only if there are sign $k$-tensors $A'_i$ for $i=1,\ldots, \ell$
satisfying $D^k(A'_i) \le c$ and a probability distribution $(p_1,
\ldots, p_\ell)$ such that
$$
\|A-\sum_{i=1}^\ell p_i A'_i \|_{\infty} \le 2 \epsilon.
$$
\label{fact:decomposition}
\end{fact}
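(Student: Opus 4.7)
The plan is to prove both directions by exploiting the fact that a public coin randomized protocol is, by definition, exactly a probability distribution over deterministic protocols, together with a clean algebraic identity relating expected output to error probability.

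The key identity is the following: for a sign tensor $A$ and any convex combination $\sum_i p_i A'_i$ of sign tensors, on input $x$ let $q(x) = \sum_{i : A'_i[x] = A[x]} p_i$ be the probability that a randomly chosen $A'_i$ agrees with $A$ at $x$. Then
\[
A[x] - \sum_i p_i A'_i[x] = A[x]\bigl(1 - A[x] \sum_i p_i A'_i[x]\bigr) = A[x]\bigl(1 - (2q(x)-1)\bigr),
\]
so $\bigl|A[x] - \sum_i p_i A'_i[x]\bigr| = 2(1-q(x))$, which is exactly twice the error probability at $x$. Thus $\|A - \sum_i p_i A'_i\|_\infty \le 2\epsilon$ is equivalent to saying that the worst-case error probability of the corresponding randomized procedure is at most $\epsilon$.

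For the forward direction I would take an optimal public coin protocol achieving $R^k_\epsilon(A) \le c$, write it as a distribution $(p_1, \ldots, p_\ell)$ over deterministic protocols each of cost at most $c$, and let $A'_i$ be the sign tensor computed by the $i$-th deterministic protocol (so $D^k(A'_i) \le c$). Applying the identity above and using that the error probability is at most $\epsilon$ on every input yields the required $\ell_\infty$ bound. For the backward direction, given sign tensors $A'_i$ with $D^k(A'_i) \le c$ and a probability distribution $(p_1, \ldots, p_\ell)$ satisfying the approximation bound, I would construct the public coin protocol that uses the shared randomness to sample $i$ with probability $p_i$ and then runs the deterministic protocol for $A'_i$. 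Its communication cost is at most $c$, and by the identity above its error on every input is at most $\epsilon$.

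There is no real obstacle here; the statement is essentially a rewriting of the definition of public coin randomized complexity. The only mild subtlety worth flagging is that the decomposition may be infinite in principle, but since $D^k(A'_i) \le c$ there are only finitely many distinct sign tensors that can arise, so one may always take $\ell$ finite by collecting identical $A'_i$'s and summing their probabilities.
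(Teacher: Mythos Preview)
Your proposal is correct and matches the paper's treatment: the paper does not give a separate proof of this fact, stating only that ``a public coin randomized protocol is simply a probability distribution over deterministic protocols,'' and leaves the verification to the reader. Your argument is precisely the intended unfolding of that definition, and the algebraic identity $\bigl|A[x]-\sum_i p_i A'_i[x]\bigr|=2(1-q(x))$ is the right way to make the equivalence explicit.
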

To lower bound randomized communication complexity we
consider an approximate variant of the cylinder intersection norm.

\begin{definition}[Approximate cylinder intersection norm]
Let $A$ be a sign $k$-tensor, and $\alpha \ge 1$.  We define the
$\alpha$-approximate cylinder intersection norm as
$$
\mu^{\alpha}(A)=\min_{B} \{\mu(B): 1 \le A \circ B \le \alpha\}
$$
In words, we take the minimum of the cylinder intersection norm
over all tensors $B$ which are signed as $A$ and have entries with
magnitude between 1 and $\alpha$. Considering the limiting case as
$\alpha \rightarrow \infty$ motivates the definition
$$
\mu^{\infty}(A)=\min_{B} \{\mu(B): 1 \le A \circ B \}
$$
\label{def:acidn}
\end{definition}
One should note that $\mu^{\alpha}(A)\le \mu^{\beta}(A)$ for $1\le
\beta \le \alpha$.

The following theorem is an immediate consequence of the definition 
of the approximate cylinder intersection norm and \factref{fact:decomposition}.
\begin{theorem}
\label{th:cc_gamma} Let $A$ be a sign $k$-tensor, and $0 \le
\epsilon < 1/2$.  Then
$$
R^k_\epsilon(A) \ge \log(\mu^{\alpha}(A)) -\log(\alpha_\epsilon)
$$
where $\alpha_\epsilon=1/(1-2\epsilon)$ and $\alpha \ge \alpha_\epsilon$.
\end{theorem}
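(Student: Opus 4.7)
The plan is to reverse-engineer a feasible witness for $\mu^\alpha(A)$ from any $c$-bit randomized protocol for $A$, where $c = R^k_\epsilon(A)$. The key ingredients are \factref{fact:decomposition}, which turns a randomized protocol into a convex combination of deterministic ones approximating $A$ in $\ell_\infty$, \thmref{th:detcc_mu} which bounds $\mu$ of each deterministic piece, and the subadditivity/homogeneity of $\mu$ (as a norm).

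First I would invoke \factref{fact:decomposition}: there exist sign $k$-tensors $A'_1,\ldots,A'_\ell$ with $D^k(A'_i) \le c$, and a distribution $(p_1,\ldots,p_\ell)$, so that $\tilde A := \sum_i p_i A'_i$ satisfies $\|A-\tilde A\|_\infty \le 2\epsilon$. The crucial observation is that because each $A'_i$ is a sign tensor, $\tilde A$ is a convex combination of entries in $\mp$, so $\|\tilde A\|_\infty \le 1$. Combined with $|A[x_1,\ldots,x_k] - \tilde A[x_1,\ldots,x_k]| \le 2\epsilon$ and $A[x_1,\ldots,x_k] \in \mp$, this forces every entry of $A \circ \tilde A$ into the interval $[1-2\epsilon,1]$.

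Next I would scale: set $B := \alpha_\epsilon \tilde A$, where $\alpha_\epsilon = 1/(1-2\epsilon)$. Then every entry of $A \circ B$ lies in $[\alpha_\epsilon(1-2\epsilon),\, \alpha_\epsilon] = [1,\alpha_\epsilon] \subseteq [1,\alpha]$, so $B$ is feasible in the definition of $\mu^\alpha(A)$. Using that $\mu$ is a norm (hence positively homogeneous and subadditive) together with \thmref{th:detcc_mu} applied to each $A'_i$:
\[
\mu^\alpha(A) \le \mu(B) = \alpha_\epsilon\, \mu(\tilde A) \le \alpha_\epsilon \sum_{i=1}^{\ell} p_i \,\mu(A'_i) \le \alpha_\epsilon \cdot 2^c.
\]
Taking logarithms yields $c \ge \log \mu^\alpha(A) - \log \alpha_\epsilon$, which is the claimed inequality.

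There is really no hard step here; the only subtlety—easily missed—is the use of $\|\tilde A\|_\infty \le 1$, which is what rules out the looser bound $A \circ \tilde A \in [1-2\epsilon, 1+2\epsilon]$ that would have forced $\alpha$ to be at least $(1+2\epsilon)/(1-2\epsilon)$ rather than $\alpha_\epsilon$. Everything else is bookkeeping with norm properties and the definitions.
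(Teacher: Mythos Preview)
Your proof is correct and is essentially identical to the paper's own argument: both take $B=\tfrac{1}{1-2\epsilon}\sum_i p_i A'_i$, check that $1\le A\circ B\le \alpha_\epsilon$, and then bound $\mu(B)$ via subadditivity, homogeneity, and \thmref{th:detcc_mu}. Your explicit justification of the upper bound $A\circ\tilde A\le 1$ via $\|\tilde A\|_\infty\le 1$ is a detail the paper leaves implicit, but otherwise the proofs coincide.
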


\begin{proof}
Let $p_i$ and $A'_i$ for $1 \le i \le \ell$ be as in
\factref{fact:decomposition}.  We take
$$
B=\frac{1}{1-2\epsilon} \sum_{i=1}^\ell p_i A'_i.
$$
Notice that $1 \le B \circ A \le \alpha_{\epsilon}$, and hence by
\defref{def:acidn}
$$
\mu^{\alpha_\epsilon}(A) \le \mu(B).
$$
Employing the fact that $\mu$ is a norm and \thmref{th:detcc_mu},
we get
\begin{align*}
\mu(B) &\le \frac{1}{1-2\epsilon}\sum_i p_i \mu(A'_i) \\
&\le \frac{1}{1-2\epsilon}\sum_i p_i 2^{D^k(A'_i)}\\
&\le \frac{2^{R^k_\epsilon(A)}}{1-2\epsilon}.
\end{align*}
\end{proof}

The nondeterministic complexity of a sign $k$-tensor $A$, denoted
$N^k(A)$, is the logarithm of the minimum cardinality of a set of cylinder
intersections $\{Z_i\}$ such that every entry of $A$ with value $-1$ is
covered by some $Z_i$, and no entry of $A$ with value $1$ is covered
by $Z_i$.  Notice that if $\{Z_i\}$ is such a covering of $A$, then letting
$B=-\sum \chi(Z_i)$ we have $1 \le A \circ (2B+J) < \infty$ where $J$ is the
all one tensor.  As $J$ is itself a cylinder, we have $\mu(J)=1$, which gives the following.

\begin{theorem}[folklore]
For a sign $k$-tensor $A$,
$$
N^k(A) \ge \log \frac{\mu^\infty(A)-1}{2}
$$
\end{theorem}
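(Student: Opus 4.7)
The plan is to execute the hint sketched in the paragraph immediately before the statement: turn a minimum nondeterministic cover into a feasible tensor for the minimization defining $\mu^\infty(A)$, and then bound its $\mu$ norm by the size of the cover.

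Concretely, suppose $N^k(A) = \log m$ and let $\{Z_1, \ldots, Z_m\}$ be a collection of cylinder intersections witnessing this, so every $-1$ entry of $A$ lies in some $Z_i$ and no $+1$ entry lies in any $Z_i$. I would set
$$B = -\sum_{i=1}^m \chi(Z_i), \qquad B' = 2B + J,$$
where $J$ is the all-ones tensor. The first step is a short case check showing $A \circ B' \ge 1$ entrywise: on a $-1$ entry $x$ at least one $Z_i$ covers $x$, so $B[x] \le -1$ and $B'[x] \le -1$, giving $A[x]\, B'[x] \ge 1$; on a $+1$ entry no $Z_i$ covers $x$, so $B[x] = 0$ and $B'[x] = 1$. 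Thus $B'$ is feasible for the minimization in \defref{def:acidn}, and hence $\mu^\infty(A) \le \mu(B')$.

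The second step is to bound $\mu(B')$. Since $\mu$ is a norm, the triangle inequality and homogeneity give $\mu(B') \le 2\mu(B) + \mu(J)$. The all-ones tensor $J$ is itself the characteristic tensor of the trivial cylinder intersection $X_1 \times \cdots \times X_k$, so $\mu(J) \le 1$. And because $B$ is already expressed as a signed combination of characteristic tensors of cylinder intersections with coefficients summing in absolute value to $m$, the definition of $\mu$ yields $\mu(B) \le m$. Combining these gives $\mu^\infty(A) \le 2m + 1 = 2 \cdot 2^{N^k(A)} + 1$, and rearranging and taking logs produces exactly the claimed bound.

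I do not anticipate a real obstacle; the only point that needs care is the reason for the additive shift by $J$: without it, a nondeterministic cover would only guarantee $B[x] = 0$ on the $+1$ entries of $A$, which fails the feasibility requirement $A \circ B \ge 1$. Shifting by $J$ moves those zeros up to $1$, and the bound $\mu(J) \le 1$ (via the trivial cylinder intersection) keeps the cost of the shift negligible, which is why the final bound loses only the additive $1$ and the factor of $2$.
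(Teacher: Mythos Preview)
Your proposal is correct and is precisely the argument the paper sketches in the paragraph preceding the theorem: take $B=-\sum_i \chi(Z_i)$, shift to $2B+J$ to achieve feasibility for $\mu^\infty$, and bound $\mu(2B+J)\le 2m+\mu(J)\le 2m+1$. The only difference is cosmetic---you write $\mu(J)\le 1$ where the paper writes $\mu(J)=1$---and only the inequality is needed.
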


As we shall see in \secref{sec:discrepancy_method}, $\mu^\infty$ is exactly
the discrepancy method, which explains why the discrepancy method cannot
show good lower bounds on disjointness, or indeed any function with low
nondeterministic or co-nondeterministic communication complexity.

\subsection{Employing duality}
\label{sec:employing_duality}

We now have a quantity, $\mu^{\alpha}(A)$, which can be used to
prove lower bounds on randomized communication complexity in the
number-on-the-forehead model.  As this quantity is defined in
terms of a minimization, however, it seems in itself a difficult
quantity to bound from below.

In this section, we employ the duality theory of linear
programming to find an equivalent formulation of $\mu^{\alpha}(A)$
in terms of a maximization problem. This makes the task of proving
lower bounds for $\mu^{\alpha}(A)$ much easier, as the $\forall$
quantifier we had to deal with before is now replaced by an
$\exists$ quantifier.

As it turns out, in order to prove lower bounds on
$\mu^{\alpha}(A)$ we will need to understand the dual norm of
$\mu$, denoted $\mu^*$. The standard definition of a dual norm is
\[
\mu^*(Q) = \max_{B:\mu(B) \le 1} |\braket{B}{Q}|,
\]
for any tensor $Q$. Since the unit ball of $\mu$ is the absolute convex hull of
the characteristic vectors of cylinder intersections, we can alternatively write
$$
\mu^*(Q)=\max_{Z} |\braket{Q}{\chi(Z)}|
$$
where the maximum is taken over all cylinder intersections $Z$.

It is instructive to compare this with the definition of discrepancy.
\begin{definition}[discrepancy]
Let $A$ be a sign $k$-tensor, and let $P$ be a probability
distribution on its entries.  The discrepancy of $A$ with respect
to $P$, written $\disc_P(A)$, is
$$
\disc_P(A)=\max_{Z} |\braket{A \circ P}{\chi(Z)}|
$$
where the maximum is taken over cylinder intersections $Z$.  
\end{definition}
Thus we see that $\disc_P(A)=\mu^*(A \circ P)$, and we can use existing
techniques for discrepancy to also upper bound $\mu^*$.  

As the dual of a dual norm is again the norm, we can write the $\mu$ norm as 
\begin{equation}
\mu(B)=\max_Q \frac{\braket{B}{Q}}{\mu^*(Q)}.
\label{dualdual}
\end{equation}
To prove our lower bounds, we will use an equivalent formulation of 
$\mu^\alpha$ in terms of the dual norm $\mu^*$.  
\begin{theorem}
Let $A$ be a sign tensor and $1\le \alpha <\infty$.
\begin{equation*}
\mu^{\alpha}(A)= \max_Q \frac{(1+\alpha) \braket{A}{Q} + (1-\alpha)\|Q\|_1}{2\mu^*(Q)}
\end{equation*}
When $\alpha=\infty$ we have
\begin{equation*}
\mu^{\infty}(A)= \max_{Q: A \circ Q \ge 0} \frac{\braket{A}{Q}}{\mu^*(Q)}
\end{equation*}
\label{mu:dual}
\end{theorem}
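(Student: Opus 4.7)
The plan is to prove the two directions separately: the ``$\ge$'' direction (the one useful for lower bounds) using \eqnref{dualdual} and a pointwise minimization, and the reverse using LP duality on a simple finite linear program.

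For the easy direction, fix any $B$ with $1 \le A \circ B \le \alpha$ and any nonzero $Q$. From \eqnref{dualdual}, $\mu(B)\,\mu^*(Q) \ge \braket{B}{Q}$. Using $A \circ A = J$, rewrite the inner product as $\braket{B}{Q} = \sum_x (A[x]B[x])(A[x]Q[x])$; the first factor lies in $[1,\alpha]$ pointwise, so $\braket{B}{Q}$ is minimized over valid $B$ by choosing this factor to be $1$ where $A[x]Q[x] \ge 0$ and $\alpha$ where $A[x]Q[x] < 0$, giving
\[
\braket{B}{Q} \ge \frac{1+\alpha}{2}\sum_x A[x]Q[x] + \frac{1-\alpha}{2}\sum_x |A[x]Q[x]| = \frac{(1+\alpha)\braket{A}{Q} + (1-\alpha)\|Q\|_1}{2}.
\]
Dividing by $\mu^*(Q)$ and then minimizing over $B$ and maximizing over $Q$ yields the desired lower bound on $\mu^\alpha(A)$. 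The $\alpha = \infty$ case uses the same calculation, except now the minimum over valid $B$ is $-\infty$ unless $A \circ Q \ge 0$, in which case it equals $\braket{A}{Q}$, recovering the second formula.

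For equality, I realize $\mu^\alpha(A)$ as a finite linear program with nonnegative variables $\alpha_i^{\pm}$ for each cylinder intersection $Z_i$, objective $\sum_i (\alpha_i^+ + \alpha_i^-)$, and constraints $1 \le A[x] \sum_i (\alpha_i^+ - \alpha_i^-) \chi(Z_i)[x] \le \alpha$ for each entry $x$. This LP is feasible (the decomposition of $B = A$ works, since $A \circ A = J$) and bounded, so strong duality applies. Attaching nonnegative dual variables $q^+[x], q^-[x]$ to the lower and upper bound constraints, the dual maximizes $\sum_x q^+[x] - \alpha \sum_x q^-[x]$ subject to $|\braket{A \circ (q^+ - q^-)}{\chi(Z_i)}| \le 1$ for every $Z_i$. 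Since $\alpha \ge 1$, any optimal solution can be taken with $q^+, q^-$ having disjoint supports (subtracting $\min(q^+[x], q^-[x])$ from both changes the objective by $(\alpha-1)\min(q^+[x],q^-[x]) \ge 0$ while preserving $q^+ - q^-$), so the substitution $Q = A \circ (q^+ - q^-)$ turns the cylinder constraints into $\mu^*(Q) \le 1$ and, by the same pointwise calculation as above, converts the objective into $\frac{(1+\alpha)\braket{A}{Q} + (1-\alpha)\|Q\|_1}{2}$. Homogenizing over $\mu^*(Q)$ yields the formula. The $\alpha = \infty$ case is the same dualization minus the upper constraints: $q^-$ disappears, and setting $Q = A \circ q$ enforces $A \circ Q \ge 0$ and reduces the objective to $\braket{A}{Q}$.

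The main obstacle is purely bookkeeping: choosing the LP formulation cleanly and carefully tracking signs so that the substitution $Q = A \circ (q^+ - q^-)$ recovers the somewhat asymmetric combination $(1+\alpha)\braket{A}{Q} + (1-\alpha)\|Q\|_1$. No machinery beyond finite-dimensional LP duality is needed.
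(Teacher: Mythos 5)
Your proposal is correct and follows essentially the same route as the paper: the ``$\ge$'' direction via the dual characterization $\mu(B)=\max_Q \braket{B}{Q}/\mu^*(Q)$ together with a pointwise sign argument (the paper phrases it by splitting $Q$ into $Q^+$ and $Q^-$, which is the same computation), and equality by writing $\mu^\alpha$ as the identical linear program over cylinder intersections and dualizing. The only difference is that you carry out the dualization and the substitution $Q=A\circ(q^+-q^-)$ explicitly (including the disjoint-support normalization), details the paper leaves as ``taking the dual in the straightforward way.''
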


\begin{proof}
We can quite easily see that the left hand side is at least as large as the right hand side, 
which is all that is needed for proving lower bounds.  By \eqnref{dualdual} and the 
definition of $\mu^\alpha$ we have
\begin{equation*}
\mu^\alpha(A)=\min_{B: 1 \le A \circ B \le \alpha} \max_Q \frac{\braket{B}{Q}}{\mu^*(Q)}.
\end{equation*}
If we rewrite $Q$ as the sum of two parts, $Q^+$, satisfying $Q^+ \circ A \ge 0$ and 
$Q^-$ satisfying $Q^- \circ A <0$ then we can see that
\begin{equation*}
\mu^\alpha(A) \ge \max_{Q^+, Q^-} \frac{\braket{A}{Q^+}+\alpha \braket{A}{Q^-}}{\mu^*(Q^+ +Q^-)}
\end{equation*}
It is now straightforward to verify that this expression can be 
reworked into the form given above in the two cases $1 \le \alpha < \infty$ and 
$\alpha=\infty$.  

To see that this inequality holds with equality, we write $\mu^\alpha$ as a linear program 
and then use duality to derive the dual expression given in the theorem.  As it is easy 
to check that the primal program is feasible with a finite optimum, by Slater's 
condition these primal and dual forms coincide with the same finite value. 

We treat the case $1 \le \alpha < \infty$ first. We can write
$\mu^{\alpha}(A)$ as a linear program as follows.  For each
cylinder intersection $Z_i$ let $X_i=\chi(Z_i)$.  Then
\begin{eqnarray*}
\mu^{\alpha}(A)&=&\min_{p,q} \sum_i p_i + q_i \\
\mbox{s.t.} &\mbox{ }& 1 \le \left(\sum_i (p_i - q_i) X_i \right) \circ A \le \alpha \\
& \mbox{ } & p_i , q_i \ge 0
\end{eqnarray*}

Taking the dual of this program in the straightforward way, we obtain
\begin{eqnarray*}
\mu^{\alpha}(A)&=& \max_Q \frac{(1+\alpha) \braket{A}{Q} + (1-\alpha)\|Q\|_1}{2} \\
   \mbox{s.t.} &\mbox{ }& |\braket{X_i}{Q}| \le 1, \mbox{ for all } X_i
\end{eqnarray*}

For $\alpha=\infty$ we get the same program as above without the
constraint $\left(\sum_i (p_i - q_i) X_i \right) \circ A \le
\alpha$.  Dualizing this program gives the desired result.
\end{proof}

Let us take a moment to compare our approach with that of Chattopadhyay 
and Ada.  They also use the approximation $\mu$ norm, but with an additive 
approximation factor rather than a multiplicative factor as we use.  More 
precisely, they use the measure 
$\mu^\epsilon(A)=\min_{B: \|A-B\|_\infty \le \epsilon} \mu(B)$.  The dual form of 
this measure has the form 
$$
\mu^\epsilon(A)=\max_Q \frac{\braket{A}{Q}-\epsilon\|Q\|_1}{\mu^*(Q)}.
$$

Chattopadhyay and Ada directly derive that this dual expression is a lower 
bound on multiparty {\em distributional} communication complexity.  Yao's 
characterization of randomized complexity in terms of distributional 
complexity \cite{Yao83} then gives that it is also a lower bound on randomized
communication complexity.  They do not mention the primal definition of $\mu^{\alpha}$,
but other than that, their proof is similar in structure to ours.  For our proof 
we do not use Yao's principle but apply duality directly on the measure $\mu$
rather than on the complexity class itself.    

While our presentation through the primal version of the $\mu$ norm is perhaps 
not as familiar as that via distributional complexity, we feel it does have advantages.
First of all, this discussion holds quite generally: 
for any norm $\Phi$ one can show using the separation theorem that the approximation 
version $\Phi^\alpha$ has a dual characterization analogous to that in \thmref{mu:dual}.
Second, we feel that the primal definition of $\mu^\alpha$ 
arises very naturally and gives insight into the origin of the dual formulation---we do not 
have to guess this formula but can derive it.  Finally, 
it is interesting to note that the primal and dual formulations are {\em equivalent}.  This
means that we do not lose anything in considering the more convenient dual formulation 
for proving lower bounds.

\subsection{The discrepancy method}
\label{sec:discrepancy_method}
Virtually all lower bounds in the general number-on-the-forehead model have used the
discrepancy method.  Let $A$ be a sign tensor, and recall the definition of $\disc_P(A)$
from Section~\ref{sec:employing_duality}.  Let $\disc(A)=\min_P \disc_P(A)$, where the 
minimum is taken over all probability distributions $P$.  The discrepancy method turns out to 
be equivalent to $\mu^{\infty}(A)$.
\begin{theorem}
$$
\mu^{\infty}(A)=\frac{1}{\disc(A)}.
$$
\end{theorem}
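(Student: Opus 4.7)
The plan is to use the dual formulation of $\mu^\infty$ given in \thmref{mu:dual}, namely
\begin{equation*}
\mu^\infty(A) = \max_{Q: A \circ Q \ge 0} \frac{\braket{A}{Q}}{\mu^*(Q)},
\end{equation*}
combined with the identity $\disc_P(A) = \mu^*(A \circ P)$ noted in \secref{sec:employing_duality}, and the elementary but critical observation that since $A$ is a sign tensor, $A \circ A = J$ (the all-ones tensor). This last fact lets us set up an explicit bijection between distributions $P$ on entries of $A$ and witnesses $Q$ in the dual program for $\mu^\infty$.

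For the lower bound $\mu^\infty(A) \ge 1/\disc(A)$, I would take $P^*$ to be a distribution achieving $\disc(A)$ and plug $Q = A \circ P^*$ into the dual expression. Then $A \circ Q = A \circ A \circ P^* = P^* \ge 0$, so $Q$ is feasible; $\braket{A}{Q} = \braket{J}{P^*} = 1$; and $\mu^*(Q) = \mu^*(A \circ P^*) = \disc_{P^*}(A) = \disc(A)$. This gives $\mu^\infty(A) \ge 1/\disc(A)$.

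For the reverse inequality $\mu^\infty(A) \le 1/\disc(A)$, I would invert the construction. Given any $Q$ with $A \circ Q \ge 0$, set $R = A \circ Q$, so $R \ge 0$ and, using $A \circ A = J$ again, $Q = A \circ R$. Then $\braket{A}{Q} = \braket{J}{R} = \|R\|_1$. Assuming $\braket{A}{Q} > 0$ (the case where it is zero forces the ratio to be $\le 0 \le 1/\disc(A)$ and can be dismissed at the start), normalize to $P = R/\|R\|_1$, which is a probability distribution. Then
\begin{equation*}
\disc_P(A) = \mu^*(A \circ P) = \mu^*(Q)/\|R\|_1 = \mu^*(Q)/\braket{A}{Q},
\end{equation*}
and $\disc(A) \le \disc_P(A)$ rearranges to $\braket{A}{Q}/\mu^*(Q) \le 1/\disc(A)$. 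Taking the maximum over feasible $Q$ completes the proof.

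I do not expect any real obstacle here: the proof is essentially a change of variables $Q \leftrightarrow A \circ P$ that exchanges the dual program for $\mu^\infty$ with the program defining $\disc$. The only point that needs mild care is the trivial edge case $\braket{A}{Q} = 0$ and keeping the direction of the inequality $\disc(A) \le \disc_P(A)$ straight when inverting the ratio.
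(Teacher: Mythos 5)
Your proof is correct and follows essentially the same route as the paper's: both use the dual formulation of $\mu^{\infty}$ from \thmref{mu:dual} together with $\disc_P(A)=\mu^*(A\circ P)$ and the change of variables $Q \leftrightarrow A\circ P$ (valid since $A\circ A$ is the all-ones tensor). The paper phrases it as one chain of equalities using homogeneity where you split into two inequalities and treat the $\braket{A}{Q}=0$ edge case explicitly, but the argument is the same.
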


\begin{proof}
By \thmref{mu:dual}, for every sign tensor $A$
\begin{align*}
\mu^{\infty}(A)= \max_{Q \circ A \ge 0} \left\{ \braket{A}{Q} :
\mu^*(Q) \le 1\right\}
\end{align*}
We can rewrite this as
\begin{align*}
\mu^{\infty}(A)=\max_{Q \circ A \ge 0}
\frac{\braket{A}{Q}}{\mu^*(Q)} =\max_{P: P\ge 0}
\frac{\braket{A}{A \circ P}}{\mu^*(A \circ P)}
\end{align*}
As both numerator and denominator are homogeneous, we have
\begin{align*}
\mu^{\infty}(A)&=\max_{\substack{P: P\ge 0 \\ \|P\|_1=1}}
\frac{\braket{A}{A \circ P}}{\mu^*(A \circ P)}
=\max_{\substack{P:P\ge 0 \\ \|P\|_1=1}}\frac{1}{\mu^*(A \circ P)} \\
&= \frac{1}{\disc(A)}.
\end{align*}
\end{proof}

\section{Techniques to bound $\mu^*(Q)$}
\label{sec:tech_bound_mu_dual}

In the last section, we saw that to bound the randomized
number-on-the-forehead communication complexity of a sign tensor
$A$, it suffices to find a tensor $Q$ such that $\braket{A}{Q}$ is
large and $\mu^*(Q)$ is small.  The first quantity is relatively
simple and is in general not too hard to compute.  Upper bounding
$\mu^*(Q)$ is more subtle.  In this section, we review some
techniques for doing this.

In upper bounding the magnitude of the largest eigenvalue of a
matrix $B$, a common thing is to consider the matrix $BB^T$, and
use the fact that $\sn{B}^2 \le \|BB^T\|$. We will try to do a
similar thing in upper bounding $\mu^*$.  In analogy with $BB^T$
we make the next definition.  Here and in what follows all expectations are 
taken with respect to the uniform distribution.
\begin{definition}[Contraction product]
Let $B$ be a $k$-tensor with entries indexed by elements from $X_1
\times \ldots \times X_k$. We define the contraction product of
$B$ along $X_1$, denoted $B \bullet_1 B$, to be a $2(k-1)$-tensor
with entries indexed by elements from $X_2 \times X_2 \times
\ldots \times X_k \times X_k$. The $x_2,x_2', \ldots, x_k, x_k'$
entry is defined to be
\begin{equation*}
B\bullet_1 B[x_2,x_2', \ldots, x_k, x_k']= \Ex_{x_1} \left[
\prod_{y_2 \in \{x_2, x_2'\}, \ldots, y_k \in \{x_k, x_k'\}}
B[x_1, y_2,\ldots, y_k] \right]
\end{equation*}
The contraction product may be defined along other dimensions mutatis mutandis.
\end{definition}

Notice that when $B$ is a $m$-by-$n$ matrix $B \bullet_1 B$
corresponds to $(1/m)BB^T$. In analogy with the fact that
$\sn{B}^2 \le m \sn{B \bullet_1 B}$, the next lemma gives a
corresponding statement for the $\mu^*$ norm and $k$-tensors. This
lemma originated in the work of Babai, Nisan, and Szegedy
\cite{BNS89} (see also \cite{Chu90, Raz00}) and all lower bounds
on the general model of randomized \NOF complexity use some
version of this lemma.  The particular statement we use is from
Chattopadhyay \cite{Cha07}. 

\begin{lemma}
Let $B$ be a $k$-tensor.  Then
$$
\left(\frac{\mu^*(B)}{\size(B)}\right)^{2^{k-1}} \le \
\frac{\mu^*(B \bullet_1 B)}{\size(B \bullet_1 B)} \le \ \Ex[|B
\bullet_1 B|]
$$
\label{claim:contraction}
\end{lemma}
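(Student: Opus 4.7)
The plan is to prove the right-hand inequality trivially and then attack the left-hand inequality by a standard BNS-style iterated Cauchy-Schwarz argument, where the only twist is carefully tracking that the leftover $\phi_1$-factors form a cylinder intersection on the $2(k-1)$-tensor $B \bullet_1 B$.

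For the right inequality, I would note that any cylinder intersection $Z$ on the tensor $B\bullet_1 B$ satisfies $|\chi(Z)|\le 1$ pointwise, so $\mu^*(B\bullet_1 B)=\max_Z|\braket{B\bullet_1 B}{\chi(Z)}|\le \|B\bullet_1 B\|_1=\size(B\bullet_1 B)\cdot\Ex[|B\bullet_1 B|]$. Dividing by $\size(B\bullet_1 B)$ gives exactly the stated bound.

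For the left inequality, pick cylinder indicators $\phi_1,\ldots,\phi_k$ (with $\phi_i$ independent of $x_i$) achieving $\mu^*(B)=\size(B)\cdot|\Ex[B\phi_1\cdots\phi_k]|$. Now apply Cauchy-Schwarz iteratively along the dimensions $k,k-1,\ldots,2$. At step $j$, the factor $\phi_{k-j+1}$ (pulled out of the innermost expectation because it does not depend on $x_{k-j+1}$) is eliminated using $|\phi_{k-j+1}|\le 1$, while each remaining object---$B$ itself and each $\phi_i$ for $i<k-j+1$---gets duplicated with a primed copy of $x_{k-j+1}$. After $k-1$ rounds I would arrive at
\begin{equation*}
\left|\Ex[B\phi_1\cdots\phi_k]\right|^{2^{k-1}}\le \Ex_{x_2,x_2',\ldots,x_k,x_k'}\Ex_{x_1}\Bigl[\prod_{\vec{\epsilon}\in\{0,1\}^{k-1}} B(x_1,x_2^{\epsilon_2},\ldots,x_k^{\epsilon_k})\cdot \prod_{\vec{\epsilon}}\phi_1(x_2^{\epsilon_2},\ldots,x_k^{\epsilon_k})\Bigr].
\end{equation*}
Pushing $\Ex_{x_1}$ onto the $B$-factors recognises the inner $\Ex_{x_1}[\prod_{\vec\epsilon} B]$ as exactly one entry of $B\bullet_1 B$ indexed by $(x_2,x_2',\ldots,x_k,x_k')$.

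The final ingredient is to observe that $\Phi_1:=\prod_{\vec{\epsilon}}\phi_1(x_2^{\epsilon_2},\ldots,x_k^{\epsilon_k})$, viewed as a function on the index set of $B\bullet_1 B$, is a $\{0,1\}$-valued cylinder intersection: for each $\vec\epsilon$ the factor $\phi_1(\vec\epsilon)$ omits one of $\{x_i,x_i'\}$ for every $i\ge 2$, so assigning that factor to any such missing coordinate slots it into a cylinder in that dimension, and taking the product of all such cylinders (grouped by dimension) realises $\Phi_1$ as $\chi(Z)$ for some cylinder intersection $Z$ in the $2(k-1)$ dimensions. Then
\begin{equation*}
\left(\frac{\mu^*(B)}{\size(B)}\right)^{2^{k-1}}\le \left|\Ex_{x_2,x_2',\ldots}[(B\bullet_1 B)\cdot \Phi_1]\right|=\frac{|\braket{B\bullet_1 B}{\Phi_1}|}{\size(B\bullet_1 B)}\le \frac{\mu^*(B\bullet_1 B)}{\size(B\bullet_1 B)}.
\end{equation*}
The main obstacle is purely bookkeeping---keeping track, through $k-1$ nested Cauchy-Schwarz steps, of which coordinates have been primed and verifying that the surviving $\phi_1$-product genuinely realises a cylinder intersection in the enlarged coordinate system; once the indexing is set up cleanly this is routine.
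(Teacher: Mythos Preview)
Your proposal is correct and follows exactly the approach the paper intends: the paper's own proof is a two-line sketch that says the right inequality is $\mu^*(X)\le\|X\|_1$ and the left inequality ``is standard, and follows by applying the Cauchy-Schwarz inequality repeatedly $k-1$ times.'' You have simply spelled out those $k-1$ Cauchy--Schwarz steps and the key bookkeeping observation (that the surviving product $\Phi_1$ is a cylinder intersection on the $2(k-1)$-tensor) that the paper leaves implicit.
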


\begin{proof}
The second inequality follows since $\mu^*(X) \le \|X\|_{1}$ for
any real tensor $X$. The first inequality is standard, and follows
by applying the Cauchy-Schwarz inequality repeatedly $k-1$ times.
\end{proof}

\subsection{Example: Hadamard tensors} We give an example to show how
\lemref{claim:contraction} can be used in conjunction with our $\mu$ method.
Let $H$ be a $N$-by-$N$ Hadamard matrix.  We show that $\mu^{\infty}(H) \ge \sqrt{N}$.  Indeed,
simply let the witness matrix $Q$ be $H$ itself.  Incidentally, this corresponds to taking the uniform
probability distribution in the discrepancy method.  With this choice we clearly have $H \circ Q \ge 0$,
and so
$$
\mu^{\infty}(H) \ge \frac{\braket{H}{H}}{\mu^*(H)}=\frac{N^2}{\mu^*(H)}
$$
Now we bound $\mu^*(H)$ using \lemref{claim:contraction} which gives:
\begin{align*}
\mu^*(H)^2 \le N^4 \ \Ex[|H \bullet_1 H|] =N^3
\end{align*}
as $H \bullet_1 H$ has nonzero entries only on the diagonal, and these entries are of magnitude
one.

Ford and G{\'a}l \cite{FG05} extend the notion of matrix orthogonality to tensors, defining what they call
Hadamard tensors.
\begin{definition}[Hadamard tensor]
Let $H$ be a sign $k$-tensor.  We say that $H$ is a Hadamard tensor if
$$
(H \bullet_1 H)[x_2, x_2', \ldots, x_k, x_k']=0
$$
whenever $x_i \ne x_i'$ for all $i=2, \ldots, k$.
\end{definition}

The simple proof above for Hadamard matrices can be easily extended to Hadamard tensors:
\begin{theorem}[Ford and G{\'a}l \cite{FG05}]
Let $H$ be a Hadamard $k$-tensor of side length $N$.  Then
$$
\mu^{\infty}(H) \ge \left(\frac{N}{k-1}\right)^{1/2^{k-1}}
$$
\end{theorem}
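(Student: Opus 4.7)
The plan is to follow the same strategy used for the Hadamard matrix example, but now applied to a Hadamard $k$-tensor, using \lemref{claim:contraction} to bound $\mu^*(H)$.

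First I would take $Q=H$ as the witness in the dual formulation of $\mu^\infty$ from \thmref{mu:dual}. Since $H \circ H = J$ is nonnegative, this choice is admissible, and the numerator is $\braket{H}{H}=N^k$. Thus it suffices to prove the upper bound
\begin{equation*}
\mu^*(H) \le N^k \left(\frac{k-1}{N}\right)^{1/2^{k-1}}.
\end{equation*}

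To obtain this I would invoke \lemref{claim:contraction} with $B=H$. Since $\size(H)=N^k$, the lemma gives
\begin{equation*}
\left(\frac{\mu^*(H)}{N^k}\right)^{2^{k-1}} \le \Ex[|H\bullet_1 H|].
\end{equation*}
Now the Hadamard property is exactly what makes the right-hand side small: the entry $(H\bullet_1 H)[x_2,x_2',\ldots,x_k,x_k']$ vanishes unless $x_i=x_i'$ for some $i\in\{2,\ldots,k\}$, and in any case has magnitude at most $1$ because it is an average of $\pm 1$ values. A union bound over the $k-1$ coordinates shows that the number of index tuples with $x_i=x_i'$ for at least one $i$ is at most $(k-1)N^{2k-3}$ out of $N^{2(k-1)}$ total, so
\begin{equation*}
\Ex[|H\bullet_1 H|] \le \frac{(k-1)N^{2k-3}}{N^{2(k-1)}} = \frac{k-1}{N}.
\end{equation*}

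Combining the two displays yields $\mu^*(H) \le N^k (k-1)^{1/2^{k-1}}/N^{1/2^{k-1}}$, and substituting into $\mu^\infty(H) \ge N^k/\mu^*(H)$ gives the claimed bound $\mu^\infty(H)\ge (N/(k-1))^{1/2^{k-1}}$. There is no real obstacle here: the whole argument is essentially a direct generalization of the matrix case, with the only mild subtlety being the counting of potentially nonzero entries of $H\bullet_1 H$, which is handled by a one-line union bound over the $k-1$ coordinates.
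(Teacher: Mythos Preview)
Your proposal is correct and matches the paper's proof essentially line for line: the same witness $Q=H$, the same application of \lemref{claim:contraction}, and the same union-bound estimate $\Ex[|H\bullet_1 H|]\le (k-1)/N$ coming from the Hadamard property. There is nothing to add.
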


\begin{proof}
We again take the witness $Q$ to be $H$ itself.  This clearly satisfies $H \circ Q \ge 0$, and so
$$
\mu^{\infty}(H) \ge \frac{\braket{H}{H}}{\mu^*(H)} = \frac{N^k}{\mu^*(H)}
$$
It now remains to upper bound $\mu^*(H)$ which we do by \lemref{claim:contraction}.  This
gives us
$$
\mu^*(H)^{2^{k-1}} \le N^{k2^{k-1}} \  \Ex[|H \bullet_1 H|]
$$
The ``Hadamard'' property of $H$ lets us easily upper bound $\Ex[|H \bullet_1 H|]$.  Note that
each entry of $H \bullet_1 H$ is of magnitude at most one, and the probability of a non-zero entry
is at most
$$
\Pr[\vee_{i=2}^k (x_i=x_i')] \le \frac{k-1}{N}
$$
by a union bound.  Hence, we obtain
$$
\mu^*(H)^{2^{k-1}} \le (k-1) \frac{N^{k2^{k-1}}}{N}.
$$
Putting everything together, we have
$$
\mu^{\infty}(H) \ge \left(\frac{N}{k-1}\right)^{1/2^{k-1}}
$$
\end{proof}

\begin{remark}
By doing a more careful inductive analysis, Ford and G{\'a}l obtain this result without the $k-1$ term in
the denominator.  They also construct explicit examples of Hadamard tensors.
\end{remark}

\section{Lower bounds on $\mu^{\alpha}$ for pattern tensors}
\label{sec:lower_bounds_on_mu_alpha}

In Section~\ref{sec:dual_polynomials} we describe a key lemma
which relates the approximate polynomial degree of $f$ to the
existence of a hard input ``distribution'' for $f$.  This will
only truly correspond to a distribution in the case of
discrepancy---otherwise it can take on negative values.  This lemma
was first used in the context of communication complexity by Sherstov
\cite{She08a} and independently by Shi and Zhu \cite{SZ07}.

In \secref{sec:pattern_tensors} we use this distribution, together
with the machinery developed in \secref{sec:tech_bound_mu_dual} to
prove lower bounds on a special kind of tensors, named pattern tensors.
The application to disjointness appears in Section~\ref{sec:lower_bound_disj}.

\subsection{Dual polynomials}
\label{sec:dual_polynomials}
We define approximate degree in a slightly non-standard way to more smoothly handle both
the bounded $\alpha$ and $\alpha=\infty$ cases.
\begin{definition}
Let $f: \{-1,+1\}^n \rightarrow \mp$.  For $\alpha \ge 1$ we say that a function $g$ gives an $\alpha$-approximation to $f$ if $1 \le g(x) f(x) \le \alpha$ for all $x \in \{-1,+1\}^n$.  Similarly we say that
$g$ gives an $\infty$-approximation to $f$ if $1 \le g(x) f(x)$ for all $x \in \{-1,+1\}^n$.  We let the
$\alpha$-approximate degree of $f$, denoted
$\deg_\alpha(f)$, be the smallest degree of a function $g$ which gives an $\alpha$-approximation
to $f$.
\end{definition}

\begin{remark}
In a more standard scenario, one is considering a 0/1 valued function $f$ and defines the
approximate degree as $\deg'_\epsilon(f)=\min\{\deg(g): \|f-g\|_{\infty} \le \epsilon\}$.  Letting
$f_\pm$ be the sign representation of $f$, one can see
that for $0\le \epsilon < 1/2$ our definition
is equivalent to the standard one in the following sense:
$\deg'_{\epsilon}(f)=\deg_{\alpha_\epsilon}(f_\pm)$ where
$\alpha_{\epsilon}=\frac{1+2\epsilon}{1-2\epsilon}$.
\end{remark}

For a fixed natural number $d$, let $\alpha_d(f)$ be the smallest
value of $\alpha$ for which there is a degree $d$ polynomial which
gives an $\alpha$-approximation to $f$.  Notice that $\alpha_d(f)$
can be written as a linear program. Namely, let
$B(n,d)=\sum_{i=0}^d \binom{n}{i}$, and $W$ be a
$2^n$-by-$B(n,d)$ incidence matrix, with rows labelled by strings
$x\in \{-1,+1\}^n$ and columns labeled by monomials of degree at most
$d$.  We set $W(x, m)=m(x)$, where $m(x)$ is the evaluation of the
monomial $m$ on input $x$. Then
$$
\alpha_d(f)=\min_y \{\|Wy\|_{\infty} : 1 \le W y \circ f\}
$$
If this program is infeasible with value $\alpha$---that is, if there is no degree $d$ polynomial which
gives an $\alpha$-approximation to $f$---then the feasibility of the dual of this program will give us
a ``witness'' to this fact.  We refer to this witness as a dual polynomial for $f$.  
It is this witness that we will use to construct a tensor $Q$ which witnesses that
$\mu^{\alpha}$ is large.

\begin{lemma}
$$
\alpha_d(f)=\max_v \left\{\frac{1+ \braket{v}{f}}{1-\braket{v}{f}} : \|v\|_1=1, v^T W=0 \right\}
$$
\end{lemma}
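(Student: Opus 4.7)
The strategy is to write $\alpha_d(f)$ as a linear program and compute its dual. The primal LP reads: minimize $\alpha$ over $(\alpha, y)$ subject to $1 \le (Wy)_x f_x \le \alpha$ for every $x \in \{-1,+1\}^n$. Since $W$ contains the constant monomial, rescaling any polynomial with the correct sign pattern yields a feasible point with finite objective, so Slater's condition applies and strong duality holds with no gap, just as in the proof of \thmref{mu:dual}.

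Next I would form the Lagrangian with nonnegative multipliers $\mu_x$ for the lower-bound constraints $(Wy)_x f_x \ge 1$ and $\lambda_x$ for the upper-bound constraints $(Wy)_x f_x \le \alpha$,
\[
L \;=\; \alpha + \sum_x \mu_x\bigl(1 - (Wy)_x f_x\bigr) + \sum_x \lambda_x\bigl((Wy)_x f_x - \alpha\bigr),
\]
and minimize over $(\alpha, y)$. The $\alpha$-coefficient forces $\sum_x \lambda_x = 1$, and the $y$-coefficient forces the orthogonality $\sum_x (\lambda_x - \mu_x) f_x\, W_{x,m} = 0$ for every monomial column $m$. The intermediate dual is then $\max \sum_x \mu_x$ over $\mu, \lambda \ge 0$ satisfying these two conditions.

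The bridge to the lemma's form is the substitution $v_x := (\mu_x - \lambda_x) f_x$. The orthogonality condition becomes exactly $v^T W = 0$, and the identity $\sum_x \mu_x = \sum_x \lambda_x + \sum_x (\mu_x - \lambda_x) = 1 + \braket{v}{f}$ shows the objective depends only on $v$. For a given $v$, the cheapest nonnegative realization is $\mu_x = (v_x f_x)^+$ and $\lambda_x = (v_x f_x)^-$, which is feasible iff $\sum_x (v_x f_x)^- \le 1$; at any optimum with $\alpha_d(f) > 1$ this constraint is tight, since otherwise one could scale $v$ up, preserving $v^T W = 0$ and strictly increasing $\braket{v}{f}$. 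So the dual reduces to $\max\{\,1 + \braket{v}{f} : v^T W = 0,\ \sum_x (v_x f_x)^- = 1\,\}$, or equivalently by homogeneity $\max_{v^T W = 0} \sum_x (v_x f_x)^+ / \sum_x (v_x f_x)^-$.

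Finally, to produce the normalization $\|v\|_1 = 1$ stated in the lemma, I would rescale any nonzero $v$ with $v^T W = 0$ to $v/\|v\|_1$, which leaves the ratio unchanged. The elementary identities
\[
\sum_x (v_x f_x)^+ + \sum_x (v_x f_x)^- = \|v\|_1 = 1,\qquad \sum_x (v_x f_x)^+ - \sum_x (v_x f_x)^- = \braket{v}{f}
\]
then give $\sum_x (v_x f_x)^+ = (1 + \braket{v}{f})/2$ and $\sum_x (v_x f_x)^- = (1 - \braket{v}{f})/2$, whence the ratio equals $\tfrac{1 + \braket{v}{f}}{1 - \braket{v}{f}}$ and the claimed identity follows. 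I expect the main delicate step to be the change-of-variables bookkeeping between $(\mu, \lambda)$ and $v$ together with the homogenization of the linear dual into the ratio form; the LP duality itself is routine and parallels the application in \thmref{mu:dual}.
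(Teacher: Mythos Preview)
Your proposal is correct and follows the same approach as the paper, which simply writes ``Follows from duality theory of linear programming'' without further detail; you have carefully worked out that duality computation. One minor remark: invoking Slater's condition is unnecessary here, since for linear programs strong duality holds whenever either side is feasible with finite value (and the infeasible case $\alpha_d(f)=\infty$ corresponds exactly to the existence of $v$ with $v^TW=0$, $\|v\|_1=1$, $\braket{v}{f}=1$, as used in \corref{cor:dual_poly}).
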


\begin{proof}
Follows from duality theory of linear programming.
\end{proof}

\begin{corollary}[Sherstov Corollary 3.3.1 \cite{She08a}, Shi-Zhu Section 3.1 \cite{SZ07}]
Let $f:\{-1,+1\}^n \rightarrow \R$ and let $d=\deg_{\alpha}(f)$.
Then there exists a function $v: \{-1,+1\}^n \rightarrow \R$ such that
\begin{enumerate}
  \item $\braket{v}{\chi_T}=0$ whenever $|T| \le d$.
  \item $\|v\|_1=1$.
  \item $\braket{v}{f} \ge \frac{\alpha -1}{\alpha+1}$.
\end{enumerate}
When $\alpha=\infty$, there is a function $v: \{-1,+1\}^n
\rightarrow \R$ satisfying items (1), (2), and such that $v(x)
f(x) \ge 0$ for all $x \in \{-1,+1\}^n$.
\label{cor:dual_poly}
\end{corollary}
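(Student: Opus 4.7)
The plan is to read the corollary directly off the preceding lemma via standard LP duality. Setting $d=\deg_\alpha(f)$, the hypothesis says that no polynomial of degree strictly less than $d$ gives an $\alpha$-approximation to $f$, so the LP value $\alpha_{d-1}(f)$ strictly exceeds $\alpha$. Feeding this into the lemma, whose incidence matrix $W$ ranges over monomials of degree at most $d-1$, yields a feasible $v$ with $\|v\|_1=1$ and $v^{T} W = 0$. The constraint $v^{T}W=0$ is exactly orthogonality to every $\chi_T$ in the low-degree range, which gives items (1) and (2). The lemma further guarantees $\frac{1+\braket{v}{f}}{1-\braket{v}{f}} \ge \alpha$, and a direct algebraic rearrangement of this inequality yields $\braket{v}{f} \ge \frac{\alpha-1}{\alpha+1}$, which is item (3). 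Attainment of the maximum follows from compactness of the feasible region (a bounded, closed subset of the finite-dimensional $\ell_1$ sphere intersected with a subspace).

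For the $\alpha=\infty$ case the rational objective $\frac{1+\braket{v}{f}}{1-\braket{v}{f}}$ of the lemma is no longer well-defined (it blows up as $\braket{v}{f}\to 1$), so I would invoke Farkas's lemma (equivalently, the separating hyperplane theorem) directly. The assumption $\deg_\infty(f)=d$ means the system $Wy \circ f \ge \mathbf{1}$ is infeasible, where the columns of $W$ index the monomials of low degree. Farkas then produces a nonnegative vector $z \neq 0$ such that $\sum_x z(x) f(x) m(x) = 0$ for every such monomial $m$. Defining $v(x) := z(x) f(x)/\|z\|_1$ gives $v(x) f(x) = z(x)/\|z\|_1 \ge 0$ pointwise, $\|v\|_1 = 1$, and $\braket{v}{\chi_T} = 0$ on the low-degree range, as required.

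The main obstacle is handling the $\alpha=\infty$ case cleanly: the finite-$\alpha$ argument is essentially algebraic manipulation of the lemma's dual form, but $\alpha=\infty$ requires a separate Farkas/separation argument because the rational objective degenerates at the boundary $\braket{v}{f}=1$ and one cannot simply pass to the limit inside the LP. A small amount of bookkeeping is also needed to reconcile the convention between $d = \deg_\alpha(f)$ and the degree range ``$|T|\le d$'' appearing in item (1), but this is a routine matter and does not affect the substance of the argument.
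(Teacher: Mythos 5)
Your proposal is correct and follows essentially the paper's own route: the corollary is read off the preceding LP-duality lemma (whose proof the paper dispatches with ``follows from duality theory of linear programming''), exactly as you do for finite $\alpha$, and your Farkas/separation argument for $\alpha=\infty$ is the standard way to fill in the case the paper leaves implicit. The off-by-one you flag is real but sits in the paper's statement rather than in your argument: the duality argument yields orthogonality to $\chi_T$ for $|T|<d$ (not $|T|\le d$ as written), which is precisely what the paper uses later when it invokes that $q$ has no nonzero Fourier coefficients of degree less than $d$.
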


\v{S}palek \cite{Spa08} has given an explicit construction of an optimal dual polynomial for the
$\OR$ function.  For our analysis, however, we only make use of the properties guaranteed
by \corref{cor:dual_poly}.

\subsection{Pattern Tensors}
\label{sec:pattern_tensors} We define a natural generalization of
the pattern matrices of Sherstov \cite{She07} to the tensor case.
We use a slightly different definition of pattern tensors than
that of Chattopadhyay \cite{Cha07} to allow the reduction to
disjointness.

Let $\phi: \{-1,+1\}^m \rightarrow \R$ be a function and $M$ a natural number.  We define a
$k$-dimensional pattern tensor $A_{k,M, \phi}$ as follows.  Let $x \in \{-1,+1\}^{mM^{k-1}}$.  We
view $x=(x^1, \ldots, x^m)$ as consisting of $m$ many blocks, where each
$x_i \in \{-1,+1\}^{M^{k-1}}$ can be viewed as a $k-1$ dimensional tensor of side length $M$.  We
further let
$y_i \in [M]^m$ for each $i=1, \ldots, k-1$ and view each $y_i=(y_i[1], \ldots, y_i[m])$ as
consisting of $m$-blocks where $y_i[j] \in [M]$ is an index into a side of $x^i$.  Now define
$$
A_{k,m,\phi}[x, y_1, \ldots, y_{k-1}]=\phi(x^1[y_1[1], \ldots, y_{k-1}[1]], \ldots,
x^m[y_1[m], \ldots, y_{k-1}[m]]).
$$
Note that $\size(A_{k,M,\phi})=2^{mM^{k-1}}M^{m(k-1)}$.  We will often use the abbreviation
$\bar y=(y_1, \ldots, y_{k-1})$.  A nice property of pattern tensors is that every $m$-bit string
$z$ appears as input to $\phi$ an equal number of times, over all choices of $x, \bar y$.

The key lemma about pattern tensors is given next.  Such a lemma was first shown
by Chattopadhyay \cite{Cha07}.  Chattopadhyay and Ada \cite{CA08} also show a statement
similar to this one.
\begin{lemma}
Let $A$ be a $(k,M, c \cdot \phi)$ pattern tensor, where $c=2^m \size(A)^{-1}$.
Suppose that $\phi$ satisfies $\ell_1(\phi)= 1$ and
$\hat \phi_T=0$ for all sets $T \subseteq [m]$ with $|T| \le d$.  Then
$$
\mu^*(A) \le 2^{-d}
$$
provided that $M \ge 2e(k-1)2^{2^{k-1}}m/d$.
\label{lem:main}
\end{lemma}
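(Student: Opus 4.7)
The plan is to combine Lemma~\ref{claim:contraction} with a Fourier-analytic analysis exploiting the low-degree vanishing of $\phi$. First, I invoke Lemma~\ref{claim:contraction} applied to $A$, which yields $\mu^*(A)^{2^{k-1}} \le \size(A)^{2^{k-1}} \cdot \Ex|A \bullet_1 A|$. Since $c = 2^m/\size(A)$ by definition, this rearranges so that to prove $\mu^*(A) \le 2^{-d}$ it suffices to establish
\[
\Ex|A \bullet_1 A| \le c^{2^{k-1}} \cdot 2^{-(d+m)\cdot 2^{k-1}}.
\]

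Second, I compute entries of $A \bullet_1 A$ directly. Writing $y_i^0 = y_i$ and $y_i^1 = y_i'$, one has
\[
A \bullet_1 A[\bar y, \bar y'] = c^{2^{k-1}} \Ex_x \prod_{\epsilon \in \{0,1\}^{k-1}} \phi(z^\epsilon),
\]
where $z^\epsilon_j = x^j[y_1^{\epsilon_1}[j], \ldots, y_{k-1}^{\epsilon_{k-1}}[j]]$. Substituting the Fourier expansion $\phi(z) = \sum_{|T|>d}\hat\phi_T \chi_T(z)$ into each of the $2^{k-1}$ factors (using the hypothesis $\hat\phi_T = 0$ for $|T| \le d$) and computing $\Ex_x$, the inner expectation becomes a sum over tuples $\{T_\epsilon\}_{\epsilon}$ with every $|T_\epsilon| > d$, of $\prod_\epsilon \hat\phi_{T_\epsilon}$ weighted by an indicator that a \emph{parity condition} holds: for each block $j$ and each position $z \in [M]^{k-1}$, the $\pm 1$ variable $x^j[z]$ must appear an even number of times in the resulting monomial. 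By the triangle inequality,
\[
\Ex_{\bar y, \bar y'}|A \bullet_1 A| \le c^{2^{k-1}} \sum_{\{T_\epsilon\}} \prod_\epsilon|\hat\phi_{T_\epsilon}|\cdot \Pr_{\bar y, \bar y'}[\text{parity holds}].
\]

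Third, I analyze the parity probability. Since the randomness in $\bar y[j], \bar y'[j]$ is independent across blocks $j$, the event factors as $\prod_j p_j$, where $p_j$ depends only on $U_j = \{\epsilon : j \in T_\epsilon\}$ and the random agreement pattern $E_j = \{i : y_i[j] \ne y_i'[j]\}$. For any $j$ with $U_j \ne \emptyset$, parity forces some collision $y_i[j] = y_i'[j]$, giving the crude bound $p_j \le (k-1)/M$. Combined with $\ell_1(\phi) = 1$ (which yields $\sum_{\{T_\epsilon\}}\prod_\epsilon |\hat\phi_{T_\epsilon}|\le 1$), and the observation that $|T_\epsilon| > d$ for all $\epsilon$ forces $|\cup_\epsilon T_\epsilon| \ge d+1$, a careful counting argument yields a bound of the required form, after which plugging in the hypothesis $M \ge 2e(k-1)2^{2^{k-1}} m/d$ closes the estimate. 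The main obstacle will be this last step of combinatorial bookkeeping: a naive union bound based only on $|J| \ge d+1$ is too weak, and one must interleave the Fourier weights, the binomial counting of tuples $\{T_\epsilon\}$ with prescribed block support $J$, and the per-block probability bounds in order to extract the threshold on $M$ that is only linear in $m/d$ (rather than exponential), with the $2^{2^{k-1}}$ factor emerging from the contraction exponent.
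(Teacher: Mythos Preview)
Your overall architecture matches the paper's proof: apply \lemref{claim:contraction}, expand $\phi$ in Fourier, observe that a nonzero contribution in block $j$ forces a collision $y_i[j]=y_i'[j]$ (your ``parity'' condition is exactly the paper's ``degenerate cube'' condition), and then sum. The gap is in how you use the hypothesis $\ell_1(\phi)=1$.

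You extract from $\ell_1(\phi)=1$ only the consequence $\sum_{\{T_\epsilon\}}\prod_\epsilon|\hat\phi_{T_\epsilon}|\le 1$. This is true but far too weak to close the estimate under the stated bound on $M$. After cancelling $\size(A)$ against $c$, a factor of $2^{m\cdot 2^{k-1}}$ remains in front, and nothing in your ingredients kills it: your per-block bound $p_j\le (k-1)/M$ together with $|J|\ge d+1$ gives at best $((k-1)/M)^{d+1}$, which for $M$ on the order of $2^{2^{k-1}}m/d$ is only about $2^{-(d+1)2^{k-1}}$, not the required $2^{-(d+m)2^{k-1}}$. No amount of ``interleaving'' with the bound $\sum_T|\hat\phi_T|\le 1$ can recover the missing $2^{-m\cdot 2^{k-1}}$, because that bound carries no information about how the Fourier mass is spread across sets $T$ of different sizes.

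The correct consequence of $\ell_1(\phi)=1$ is the \emph{pointwise} bound $|\hat\phi_T|\le 2^{-m}$, which immediately gives $\prod_\epsilon|\hat\phi_{T_\epsilon}|\le 2^{-m\cdot 2^{k-1}}$ and cancels the offending factor. This is exactly what the paper does: it replaces each Fourier coefficient by $2^{-m}$, then \emph{conditions on the set of degenerate blocks} (say of size $g$), counts the surviving tuples $\{T_\ell\}$ by at most $2^{g\cdot 2^{k-1}}$ (each $T_\ell$ must lie inside the degenerate set), bounds the probability of $g$ degenerate blocks by $\binom{m}{g}((k-1)/M)^g$, and sums over $g>d$. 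With the pointwise bound in place your sketch essentially becomes the paper's proof; without it the combinatorics cannot be made to work.
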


\begin{proof}
The idea of the proof will be to bound $\Ex[|A \bullet_1 A|]$ and apply \lemref{claim:contraction}
to obtain an upper bound on $\mu^*(A)$.  For a string $\ell \in \01^{k-1}$ we use
the abbreviation $\bar y^\ell=(y_1^{\ell_1}, \ldots, y_{k-1}^{\ell_{k-1}})$.  In particular,
$\bar y^0=(y_1^0, \ldots, y_{k-1}^0)$ and $\bar y^1=(y_1^1, \ldots, y_{k-1}^1)$.
\begin{align}
\Ex[|A \bullet_1 A|]&=
\left(\frac{2^m}{\size(A)}\right)^{2^{k-1}}
\Ex_{\bar y^0, \bar y^1} \left|
\Ex_x \prod_{\ell=0}^{2^{k-1}-1}
\sum_{T \subseteq [m]} \hat \phi(T) \prod_{i \in T} x^i[y_1^{\ell_1}[i], \ldots, y_{k-1}^{\ell_{k-1}}[i]]
\right| \\
&\le \frac{1}{\size(A)^{2^{k-1}}} \Ex_{\bar y^0, \bar y^1}
\sum_{\substack{T_0, \ldots, T_{2^{k-1}-1} \\ |T_\ell | > d}}
\prod_{i=1}^m \left| \Ex_{x^i} \prod_{\substack{\ell \in \01^{k-1} \\ i \in T_\ell}}
x^i[y_1^{\ell_1}[i], \ldots, y_{k-1}^{\ell_{k-1}}[i]] \right|.
\label{eq:expectation}
\end{align}
Here we have used the fact that $\hat \phi(T) \le 2^{-n} \ell_1(\phi)=2^{-n}$.

We now develop a sufficient condition in terms of $\bar y^0, \bar y^1$ and
$T_0, \ldots, T_{2^{k-1}-1}$, for the
product of expectations over $x^i$ to be zero.  We say that
$\bar y^0, \bar y^1$ select a nondegenerate cube in
position $i$ if $y_j^0[i] \ne y_j^1[i]$ for all $j=1, \ldots, k-1$.  The reason for this terminology is
that in this case $(y_1^{\ell_1}[i], \ldots, y_{k-1}^{\ell_{k-1}}[i])$ define $2^{k-1}$ distinct points over
$\ell \in \01^{k-1}$.  If this is not the case, we say that $\bar y^0, \bar y^1$ select a degenerate
cube in position $i$.

Notice that if $\bar y^0, \bar y^1$ select a nondegenerate cube in position $i \in [m]$ and
$i \in T_\ell$ for some $\ell \in \01^{k-1}$ then
$$
\Ex_{x^i} \prod_{\substack{\ell \in \01^{k-1} \\ i \in T_\ell}}
x^i[y_1^{\ell_1}[i], \ldots, y_{k-1}^{\ell_{k-1}}[i]]=0.
$$

We will now upper bound the probability over the choice of $\bar y^0, \bar y^1$ and
$T_0, \ldots, T_{2^{k-1}-1}$ that this does not happen.
Suppose that $\bar y^0, \bar y^1$ select $g$ many degenerate cubes.  By the above
reasoning the number of sets $T_0, \ldots, T_{2^{k-1}-1}$
which lead to a nonzero expectation is at most
$$
\left(\sum_{r=d+1}^g \binom{g}{r}\right)^{2^{k-1}} \le 2^{g2^{k-1}}.
$$

Now we bound the probability that $\bar y^0, \bar y^1$ select $g$
many degenerate cubes.  The probability that $y_j^0[i]=y_j^1[i]$
is $1/M$.  Thus by a union bound, the probability that a single
cube is degenerate is at most $(k-1)/M$.  Finally, as each index
is chosen independently, the probability of $g$ many degenerate
cubes is at most
$$
\binom{m}{g} \left(\frac{k-1}{M}\right)^g.
$$

Putting everything together we have
\begin{align*}
\Ex[|A \bullet_1 A|] &\le \frac{1}{\size(A)^{2^{k-1}}}
\sum_{g=d+1}^m \binom{m}{g} \left(\frac{k-1}{M}\right)^g 2^{g2^{k-1}} \\
&\le \frac{1}{\size(A)^{2^{k-1}}} \sum_{g=d+1}^m \left(\frac{e(k-1)2^{2^{k-1}}m}{dM}\right)^g \\
& \le \frac{2^{-d}}{\size(A)^{2^{k-1}}}
\end{align*}
provided that $M \ge 2e(k-1)2^{2^{k-1}}m/d$.
\end{proof}

\begin{remark}
Our analysis cannot be improved by much without using more explicit
information about the Fourier coefficients $\hat q(T)$ than given in
\corref{cor:dual_poly}.  Apart from removing the Fourier coefficients, the only inequality we have
used to arrive at \eqnref{eq:expectation} is to turn an absolute value of a sum into a sum of
absolute values.  When $\bar y^0, \bar y^1$ select a degenerate cube, the most likely case is
that it is what we call $1$-degenerate---that is $y_i^0[t]=y_i^1[t]$ for {\em exactly one} $1\le i \le k-1$.
If the degenerate cubes selected by $\bar y^0, \bar y^1$ are all $1$-degenerate, then one can
see that the only sets $\{T_\ell\}$ which lead to a nonzero expectation are ones where the sets $T_\ell$
come in pairs.  The number of such paired sets $\{T_\ell\}$ is not significantly smaller than the
upper bound we give; furthermore, in this case all Fourier coefficients will be taken to an
even power and so no cancellation occurs and the absolute value of the sum will be equal to the sum
of absolute values.
\end{remark}

With this lemma in hand, we can now show our main result, proving
a lower bound on $\mu^\alpha(A_{k,M,f})$ in terms of the
approximate degree of $f$.

\begin{theorem}
\label{th:deg_mu_general}
For a nonnegative integer $m$ and a Boolean function $f$ on $m$ variables, and an integer
$k \ge 2$
\[
\log \mu^{\alpha}(A_{k,M,f}) \ge  \deg_{\alpha_0}(f)/2^{k-1} + \log \frac{\alpha_0-\alpha}{\alpha_0+1},
\]
for every $1 \le \alpha < \alpha_0 < \infty$, provided $M \ge
2e(k-1)2^{2^{k-1}}m/\deg_{\alpha_0}(f)$.
\\
Furthermore,
\[
\log \mu^{\infty}(A_{k,M,f}) \ge \deg_{\infty}(f)/2^{k-1},
\]
provided $M \ge 2e(k-1)2^{2^{k-1}}m/\deg_\infty(f)$
\end{theorem}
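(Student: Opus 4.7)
The plan is to exploit the dual characterization of $\mu^\alpha$ from \thmref{mu:dual} and construct an explicit witness tensor $Q$ from the dual polynomial supplied by \corref{cor:dual_poly}. Specifically, let $d = \deg_{\alpha_0}(f)$ and let $v:\{-1,+1\}^m \to \R$ be the dual polynomial of \corref{cor:dual_poly} for $f$ at approximation level $\alpha_0$. Notice that $v$ satisfies exactly the hypotheses of \lemref{lem:main}: $\hat v(T) = 0$ for $|T| \le d$ (from orthogonality to low-degree characters) and $\ell_1(v) = \|v\|_1 = 1$. This suggests taking $Q$ to be the pattern tensor $A_{k,M,c\cdot v}$ with the same shape parameters as $A_{k,M,f}$ and with $c = 2^m/\size(A_{k,M,f})$, so that \lemref{lem:main} directly yields $\mu^*(Q) \le 2^{-d/2^{k-1}}$ under the stated hypothesis on $M$.

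Next I would compute $\braket{A_{k,M,f}}{Q}$ and $\|Q\|_1$ using the key symmetry property noted in \secref{sec:pattern_tensors}: every $z \in \{-1,+1\}^m$ appears as the inner argument equally often, namely $\size(A_{k,M,f})/2^m$ times. The choice $c = 2^m/\size(A_{k,M,f})$ is calibrated precisely so that these combinatorial factors cancel, giving $\braket{A_{k,M,f}}{Q} = \braket{v}{f}$ and $\|Q\|_1 = \|v\|_1 = 1$. Plugging these into the finite-$\alpha$ dual formula of \thmref{mu:dual} gives
\[
\mu^\alpha(A_{k,M,f}) \;\ge\; \frac{(1+\alpha)\braket{v}{f} + (1-\alpha)}{2\,\mu^*(Q)},
\]
and applying the bound $\braket{v}{f} \ge (\alpha_0-1)/(\alpha_0+1)$ from \corref{cor:dual_poly}, the numerator collapses (after routine simplification) to $2(\alpha_0-\alpha)/(\alpha_0+1)$. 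Dividing by $2\cdot 2^{-d/2^{k-1}}$ and taking logarithms yields the desired bound. For the $\alpha=\infty$ case I would use the second part of \corref{cor:dual_poly}, where $v(z)f(z) \ge 0$ pointwise; then $Q = A_{k,M,c\cdot v}$ satisfies $A \circ Q \ge 0$, and because $\|v\|_1=1$ and $vf \ge 0$ one has $\braket{v}{f} = \|v\|_1 = 1$, so the second formula of \thmref{mu:dual} gives $\mu^\infty \ge 1/\mu^*(Q) \ge 2^{\deg_\infty(f)/2^{k-1}}$.

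The substantive obstacle—controlling the discrepancy-type quantity $\mu^*(Q)$ of a pattern tensor built from a dual polynomial—has already been handled by \lemref{lem:main}, which is where the restriction $M \ge 2e(k-1)2^{2^{k-1}}m/d$ originates and where all the genuine multiparty work lives (the iterated Cauchy--Schwarz of \lemref{claim:contraction}, the cube-degeneracy analysis, and the use of the vanishing Fourier coefficients of $v$). Once one recognizes that the shape of the hypothesis of \lemref{lem:main} matches the guarantees of \corref{cor:dual_poly}, the witness $Q$ is essentially forced, and the remainder is bookkeeping: the normalization constant $c$ turns counting arguments into clean inner-product identities, and the algebra in the numerator of the dual formula is a short computation.
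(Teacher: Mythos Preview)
Your proposal is correct and follows essentially the same route as the paper: you construct the witness $Q$ as the $(k,M,c\cdot v)$ pattern tensor from the dual polynomial $v$ of \corref{cor:dual_poly} with $c=2^m/\size(A)$, read off $\|Q\|_1=1$ and $\braket{A}{Q}=\braket{v}{f}$ from the equidistribution of inner arguments, invoke \lemref{lem:main} for the $\mu^*(Q)$ bound, and simplify the numerator of the dual formula from \thmref{mu:dual}. This is exactly the paper's argument, and your explicit computation that the numerator equals $2(\alpha_0-\alpha)/(\alpha_0+1)$ matches the paper's stated bound.
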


\begin{proof}
For simplicity we will drop the subscripts and just write $A$ for $A_{k,M,f}$.  Recall
that
\begin{align*}
\mu^{\alpha}(A)&=\max_{Q:\|Q\|_1=1} \frac{(1+\alpha)\braket{A}{Q}+(1-\alpha)}{2\mu^*(Q)} \\
\mu^{\infty}(A)&=\max_{Q: Q\circ A \ge 0} \frac{\braket{A}{Q}}{\mu^*(Q)}.
\end{align*}

Let $q$ be the vector from \corref{cor:dual_poly} which witnesses that the $\alpha_0$-approximate
degree of $f$ is at least $d$.  We let $Q$ be the $(k,M,c \cdot q)$ pattern tensor where
$c=2^m/\size(A)$.  This choice of normalization implies that $\|Q\|_1=1$ as $\|q\|_1=1$.

First consider the case $1\le \alpha < \infty$.  Then we have
$\braket{q}{f} \ge (\alpha_0-1)/(\alpha_0+1)$, and so
$\braket{A}{Q} \ge (\alpha_0-1)/(\alpha_0+1)$.  This allows us to bound $(1/2)$ the term in the
numerator of $\mu^{\alpha}(A)$ as follows:
\begin{align*}
\frac{(1+\alpha)\braket{A}{Q} + (1-\alpha)}{2} &\ge
\frac{\alpha_0-\alpha}{\alpha_0+1}.
\end{align*}

In the case $\alpha=\infty$, observe that $Q$ inherits the property $Q \circ A \ge 0$ as $q \circ f \ge 0$.
The fact that $q \circ f \ge 0$ together with $\|q\|_1=1$ gives $\braket{f}{q}=1$, which in turn
implies $\braket{A}{Q}=1$.

Let $d=\deg_{\alpha_0}(f)$ or $d=\deg_\infty(f)$, respectively.
As $q$ has no nonzero Fourier coefficients of degree less than $d$ by
\corref{cor:dual_poly}, we can apply Lemma~\ref{lem:main} to give
$$
\mu^*(Q) \le 2^{-d},
$$
under the assumption that $M \ge 2e(k-1)2^{2^{k-1}}m/d$.  The statement now follows from
\lemref{claim:contraction}.
\end{proof}

\section{Applications}
In this section, we apply Theorem~\ref{th:deg_mu_general}
to prove lower bounds on the $k$-party number-on-the-forehead randomized
communication complexity of disjointness.  Then we formally state the implications
this result has for proof systems via the results of Beame, Pitassi, and Segerlind \cite{BPS06}.

\subsection{A lower bound for disjointness}
\label{sec:lower_bound_disj}
Let $\OR_n: \{-1,+1\}^n \to \{-1,+1\}$ be the OR function on $n$ bits, and let
$\DISJ_{k,n}:(\{-1,+1\}^n)^k \rightarrow \{-1,+1\}$ be defined as
$\DISJ_{k,n}(x_1,\ldots,x_k) = -\OR_n(x_1 \wedge x_2 \ldots \wedge x_k)$.

By embedding a pattern tensor into the tensor $\DISJ_{k,n}$, we can get the following
lower bound.
\begin{corollary}
$$
R_{1/4}(\DISJ_{k,n})=\Omega \left(\frac{n^{1/(k+1)}}{2^{2^{k}}}\right)
$$
\label{cor:disj}
\end{corollary}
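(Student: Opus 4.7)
The plan is to combine \thmref{th:deg_mu_general} (a lower bound on $\mu^\alpha$ of a pattern tensor in terms of the approximate degree of the inner function) with \thmref{th:cc_gamma} (converting $\mu^\alpha$ bounds into randomized communication lower bounds), by exhibiting the pattern tensor $A_{k,M,\OR_m}$ as a restriction of $\DISJ_{k,n}$ for an appropriate $n = mM^{k-1}$. For $\epsilon = 1/4$ we have $\alpha_\epsilon = 2$, so it suffices to bound $\mu^2(A_{k,M,\OR_m})$ from below.

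For the embedding, identify the coordinate set $[n]$ with $[m]\times[M]^{k-1}$. Player~$1$ feeds its pattern-tensor input $x \in \mp^{mM^{k-1}}$ directly into the $\DISJ$ instance. For each $l \in \{2, \dots, k\}$, given $y_{l-1} \in [M]^m$, player~$l$ locally forms $z_l \in \mp^n$ by setting $z_l(i, j_1, \dots, j_{k-1}) = -1$ exactly when $j_{l-1} = y_{l-1}[i]$. At coordinate $(i,\vec j)$, all of $z_2, \dots, z_k$ equal $-1$ iff $\vec j = (y_1[i], \dots, y_{k-1}[i])$, and therefore
\[
\DISJ_{k,n}(x, z_2, \ldots, z_k) = -A_{k,M,\OR_m}(x, y_1, \dots, y_{k-1}).
\]
Since randomized complexity is invariant under negation, $R_{1/4}(\DISJ_{k,n}) \ge R_{1/4}(A_{k,M,\OR_m})$.

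Next I invoke \thmref{th:deg_mu_general} with $f = \OR_m$, $\alpha = 2$, and (say) $\alpha_0 = 3$. By Nisan--Szegedy, $\deg_3(\OR_m) = \Theta(\sqrt m)$, so the hypothesis $M \ge 2e(k-1) 2^{2^{k-1}} m / \deg_3(\OR_m)$ is satisfied by taking $M = \Theta(k\,2^{2^{k-1}} \sqrt m)$. The theorem then delivers
\[
\log \mu^2(A_{k,M,\OR_m}) \;\ge\; \frac{\deg_3(\OR_m)}{2^{k-1}} - O(1) \;=\; \Omega\!\left(\frac{\sqrt m}{2^{k-1}}\right),
\]
and \thmref{th:cc_gamma} upgrades this to $R_{1/4}(A_{k,M,\OR_m}) = \Omega(\sqrt m / 2^{k-1})$.

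The last step is a parameter balance. From $n = m M^{k-1} = \Theta\bigl(k^{k-1} 2^{(k-1)2^{k-1}} m^{(k+1)/2}\bigr)$ I solve for $m$, yielding $\sqrt m = \Omega\bigl(n^{1/(k+1)} / 2^{2^{k-1}}\bigr)$ after absorbing polynomial-in-$k$ factors into the doubly exponential denominator. Consequently
\[
R_{1/4}(\DISJ_{k,n}) \;=\; \Omega\!\left(\frac{\sqrt m}{2^{k-1}}\right) \;=\; \Omega\!\left(\frac{n^{1/(k+1)}}{2^{2^{k-1}+k-1}}\right) \;=\; \Omega\!\left(\frac{n^{1/(k+1)}}{2^{2^k}}\right).
\]
The one point requiring care is checking that the polynomial-in-$k$ slack from the definition of $M$ (and the integer rounding in $M$) is genuinely dominated by the $2^{2^k}$ in the denominator; this is the main arithmetical obstacle, and is the reason the clean exponent $2^k$ rather than $2^{k-1}$ appears in the bound. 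Everything else is routine once the embedding and the choice of $M$ are pinned down.
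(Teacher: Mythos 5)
Your proposal is correct and follows essentially the same route as the paper: the identical embedding of the pattern tensor $A_{k,M,\OR_m}$ into $\DISJ_{k,n}$ via indicator strings $z_l$, the same invocation of \thmref{th:deg_mu_general} with the Nisan--Szegedy bound on $\deg_3(\OR_m)$ together with \thmref{th:cc_gamma}, and the same parameter balance $M = \Theta\bigl(k\,2^{2^{k-1}}\sqrt{m}\bigr)$, $n \approx mM^{k-1}$ (the paper fixes $c_k = 5e(k-1)2^{2^{k-1}}$ and handles the rounding you flag by taking $m=\floor{n/(2c_k)^{k-1}}$, $M=c_k\ceil{\sqrt m\;}$ so that $mM^{k-1}\le n$). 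The arithmetic absorbing the $\mathrm{poly}(k)\cdot 2^{(k-1)2^{k-1}/(k+1)}\cdot 2^{k-1}$ slack into $2^{2^k}$ checks out, so no gap remains.
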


\begin{proof}
The idea of the proof will be to embed an appropriate pattern tensor into $\DISJ_{k,n}$ and
apply \thmref{th:deg_mu_general}.
Let $c_k=5e(k-1)2^{2^{k-1}}$.  As Nisan and Szegedy have shown $\deg_3(\OR_n) \ge \sqrt{n/6}$,
we wish to define integers $m, M$ such that $M \ge c_k \sqrt{m}$ and
$mM^{k-1} \le n$.  To this end, let $m=\floor{\tfrac{n}{(2c_k)^{k-1}}}$ and
$M = c_k \ceil{\sqrt{m}\;}$.  Let $n'=mM^{k-1}$.  One can easily check that $n' \le n$.

We will now see that the pattern tensor $(k,M, \OR_m)$ is a subtensor of
$\OR_{n'}(x_1 \wedge \ldots \wedge x_k)$.  This will then give the result
by the obvious reduction to $\DISJ_{k,n}$.

Let $A$ be the $(k,M, \OR_m)$ pattern tensor.  Recall that
$$
A[x, y_1, \ldots, y_{k-1}]=\OR_m(x^1(y_1[1], \ldots, y_{k-1}[1]), \ldots, x^m[y_1[m], \ldots, y_{k-1}[m]]),
$$
where each $y_j[i] \in [M]$, and $x^j$ is a $k-1$ dimensional tensor of side length $M$.
To each $y_j[i]$ we associate a $k-1$ tensor $z_j^i$ of side length $M$, where
$z_j^i[t_1, \ldots, t_{k-1}]=1$ if and only if $t_j=y_j[i]$.  In this way,
$x^1[y_1[1], \ldots, y_{k-1}[1]]=\OR_{M^{k-1}}(x^1 \wedge z_1^1 \wedge \ldots \wedge z_{k-1}^1)$.
Letting $z_j=(z_j^1, \ldots, z_j^m)$ we have
$$
\OR_{n'}(x_1 \wedge z_1 \ldots \wedge z_{k-1})=
\OR_m( \OR_{M^{k-1}}(x_1^1 \wedge z_1^1 \wedge \ldots \wedge z_{k-1}^1), \ldots,
\OR_{M^{k-1}}(x_1^m \wedge z_1^m \wedge \ldots \wedge z_{k-1}^m)).
$$
This shows that $A$ is a subtensor of $-\DISJ_{k,n'}$.
The result now follows from \thmref{th:deg_mu_general} and \thmref{th:cc_gamma}.
\end{proof}


\begin{remark}
Note that a statement similar to that of \corref{cor:disj}
can be proved for any symmetric function, not just $\OR$.  But for
some functions (e.g.\ threshold functions with threshold a constant fraction of $n$) much better
bounds can be proved by reduction to inner product. For this reason, we do not include the
general statement here.
\end{remark}

\subsection{Proof systems}
\label{sec:proof_systems}
In this section we formally define the proof systems discussed in the introduction, and
the lower bounds which follow from our results on disjointness.

A $k$-threshold formula is a formula of the form $\sum_j \gamma_j m_j \ge t$, where $t, \gamma_j$ are
integers, and each $m_j$ is a monomial over variables $x_1, \ldots, x_n$.  The size of a $k$-threshold
formula is the sum of the sizes of $\gamma_j$ and $t$, written in binary.  For $k$-threshold formulas
$f_1, f_2, g$, we say that $g$ is {\em semantically entailed} by $f_1$ and $f_2$ if every 0/1 assignment
to $x_1, \ldots, x_n$ that satisfies both $f_1$ and $f_2$ also satisfies $g$.

Let $\phi$ be an unsatisfiable CNF formula with variables $x_1, \ldots, x_n$.  For each clause of
$\phi$ we create a linear threshold formula which is satisfied if and only if the clause is.  We refer to
these clauses as {\em axioms}.  We say that $\mathcal{P}$ is a $\Th(k)$ refutation of $\phi$ if
\begin{itemize}
  \item $\mathcal{P}$ is a sequence $L_1, \ldots, L_t$ of $k$-threshold formulas.
  \item Each formula $L_j$ is either an axiom or is semantically entailed by formulas $L_i, L_{i'}$
  with $i, i' < j$.
  \item The final formula $L_t$ is $0 \ge 1$.
\end{itemize}
The size of $\mathcal{P}$ is the sum of the sizes of $L_1, \ldots, L_t$.  We say that
$\mathcal{P}$ is {\em tree-like} if the underlying directed acyclic graph representing the
implication structure of the proof is a tree.

We are now ready to state the connection of \cite{BPS06} between the \NOF complexity of
disjointness and the size of $\Th(k)$ proofs.

\begin{theorem}[Beame, Pitassi, and Segerlind \cite{BPS06}]
Let $k \ge 2$ be a constant.  For every $n$, there is a CNF formula $\phi$ on $n$ variables such
that the size of any $\Th(k-1)$ refutation of $\phi$ is at least
$$
\exp\left(\Omega\left(\frac{R_{1/4}^k(\DISJ_{k,m})}{\log n}\right)^{1/3} \right).
$$
where $m=\tfrac{n^{2/3}}{2\log n}$. 
\end{theorem}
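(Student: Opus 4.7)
The theorem is an immediate combination of the disjointness lower bound from Corollary~\ref{cor:disj} with the reduction of Beame, Pitassi, and Segerlind. My plan is to fix the parameters, invoke BPS06's simulation theorem applied to general $\Th(k-1)$ refutations, and take the contrapositive. Nothing beyond Corollary~\ref{cor:disj} is contributed here on the communication side; the job is to verify that the quantitative parameters in BPS06's reduction give exactly the stated $(\log s)^3\log n$ trade-off.

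First, set $m=n^{2/3}/(2\log n)$ and write $R=R_{1/4}^k(\DISJ_{k,m})$. By Corollary~\ref{cor:disj}, for constant $k$ this is polynomial in $n$. Next, invoke BPS06's construction: for every $n$ they exhibit an unsatisfiable CNF $\phi$ on $n$ variables obtained by lifting the communication tensor of $\DISJ_{k,m}$ through an encoding gadget that partitions the $n$ variables into $k$ coordinate classes aligned with the NOF partition. Their main simulation theorem asserts that any (dag-like) $\Th(k-1)$ refutation of $\phi$ of size $s$ can be converted into a public-coin randomized $k$-party NOF protocol computing $\DISJ_{k,m}$ with error $1/4$ and communication at most $c(\log s)^3\log n$ for an absolute constant $c$. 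The crucial NOF feature that makes the reduction go through is that every monomial appearing in a $\Th(k-1)$ formula uses at most $k-1$ of the $k$ coordinate classes, so at least one player has the whole monomial off-forehead and can evaluate it without communication.

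With the simulation in hand, the contrapositive is immediate: if the minimum refutation size of $\phi$ is $s$, then $c(\log s)^3\log n \ge R$, i.e.
\[
s\ge\exp\!\left(\Omega\!\left(\left(\frac{R}{\log n}\right)^{1/3}\right)\right)=\exp\!\left(\Omega\!\left(\left(\frac{R_{1/4}^k(\DISJ_{k,m})}{\log n}\right)^{1/3}\right)\right),
\]
which is exactly the claimed bound.

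The main obstacle lies entirely inside BPS06's reduction, specifically the fact that it applies to \emph{general} (dag-like) $\Th(k-1)$ refutations rather than merely tree-like ones. Their argument associates a search problem to the refutation DAG whose randomized NOF complexity is controlled by $(\log s)^3\log n$ via a randomized traversal of the DAG combined with an NOF evaluation of each line; the three $\log s$ factors reflect, roughly, the cost of identifying the current line, the cost of specifying the branch taken at each internal vertex, and the cost of evaluating one $\Th(k-1)$ formula with up to $s$ monomials in the NOF model. In the present proof this simulation is used as a black box; the only new input is the sharp polynomial lower bound on $R_{1/4}^k(\DISJ_{k,m})$ from Corollary~\ref{cor:disj}, which is what turns BPS06's reduction into the $\exp\bigl(n^{\Omega(1)}\bigr)$ refutation-size lower bound stated in the theorem.
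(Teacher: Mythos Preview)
The paper does not prove this theorem at all; it is quoted verbatim from \cite{BPS06} and used as a black box. The only thing the present paper contributes afterward is the corollary that substitutes the value of $R_{1/4}^k(\DISJ_{k,m})$ from \corref{cor:disj} into the stated bound. So there is no ``paper's own proof'' to compare against.

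Your write-up conflates the theorem with that subsequent corollary. The theorem, as stated, is purely the BPS reduction: it expresses the refutation-size lower bound \emph{in terms of} $R_{1/4}^k(\DISJ_{k,m})$ and is agnostic about what that quantity is. It does not require \corref{cor:disj}, and it does not assert an $\exp(n^{\Omega(1)})$ bound. When you write that ``the only new input is the sharp polynomial lower bound on $R_{1/4}^k(\DISJ_{k,m})$ from \corref{cor:disj}, which is what turns BPS06's reduction into the $\exp(n^{\Omega(1)})$ refutation-size lower bound stated in the theorem,'' you are describing the corollary, not the theorem. Strip \corref{cor:disj} out entirely: what remains of your argument is simply an informal sketch of the \cite{BPS06} simulation, which is exactly how the paper treats it.

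One substantive caution: you assert that the BPS simulation applies to \emph{general (dag-like)} $\Th(k-1)$ refutations. The paper is explicit, in the abstract, in \secref{sec:bps}, and in \secref{sec:proof_systems}, that the consequences are for \emph{tree-like} refutations; the definition of ``tree-like'' is introduced precisely for this purpose. Your description of the simulation as a ``randomized traversal of the DAG'' with three $\log s$ factors is not what \cite{BPS06} do in the tree-like case, and claiming the dag-like result without further justification is an overclaim relative to what the paper (and, to my knowledge, \cite{BPS06}) actually establish.
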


Substituting the bounds from \corref{cor:disj} we obtain the following.
\begin{corollary}
Let $k \ge 2$ be a constant.  For every $n$ there is a CNF formula $\phi$ over $n$ variables
which requires $\Th(k-1)$ refutation proofs of size
$$
\exp\left(\Omega\left(\frac{n^{2/(9k+9)}}{(\log n)^{4/9} \ 2^{2^{k}/3}}\right)\right).
$$
\end{corollary}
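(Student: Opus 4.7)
The plan is simply to substitute the randomized communication lower bound from \corref{cor:disj} into the Beame--Pitassi--Segerlind theorem stated immediately above, and then simplify the resulting exponent. Since the ``hard work'' (bounding $\mu^\alpha$ via the pattern tensor embedding) has already been done, the remaining proof is essentially exponent arithmetic; there is no substantive new obstacle, only bookkeeping with $n$, $m$, $k$, and $\log n$.

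First I would set $m=\tfrac{n^{2/3}}{2\log n}$ as dictated by the BPS theorem. Applying \corref{cor:disj} with this value of $m$ gives
\[
R_{1/4}^k(\DISJ_{k,m}) \;=\; \Omega\!\left(\frac{m^{1/(k+1)}}{2^{2^k}}\right) \;=\; \Omega\!\left(\frac{n^{2/(3(k+1))}}{(2\log n)^{1/(k+1)}\,2^{2^k}}\right).
\]
Dividing by $\log n$ and taking the cube root, as the BPS bound demands, then yields
\[
\left(\frac{R_{1/4}^k(\DISJ_{k,m})}{\log n}\right)^{1/3} \;=\; \Omega\!\left(\frac{n^{2/(9(k+1))}}{(\log n)^{(k+2)/(3(k+1))}\,2^{2^k/3}}\right).
\]
Since $k$ is a constant and $(k+2)/(3(k+1)) \le 4/9$ for all $k \ge 2$ (with equality at $k=2$), the $\log n$ factor in the denominator is dominated by $(\log n)^{4/9}$ up to constants absorbed into the $\Omega(\cdot)$. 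Exponentiating then gives the claimed size bound
\[
\exp\!\left(\Omega\!\left(\frac{n^{2/(9k+9)}}{(\log n)^{4/9}\,2^{2^k/3}}\right)\right).
\]

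The only minor subtlety to check is that, for each constant $k$, the parameter $m=n^{2/3}/(2\log n)$ is large enough that \corref{cor:disj} applies with meaningful bounds: one needs $m^{1/(k+1)}/2^{2^k}$ to be growing, which is immediate for fixed $k$ and $n$ large. Beyond this, the proof is just the two-line substitution above, so I would present it as a brief corollary with the calculation inline.
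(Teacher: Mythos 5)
Your proposal is correct and is exactly the paper's (implicit) argument: the paper derives this corollary by substituting \corref{cor:disj} with $m=n^{2/3}/(2\log n)$ into the Beame--Pitassi--Segerlind theorem and simplifying. Your bookkeeping checks out, including the observation that $(k+2)/(3(k+1))\le 4/9$ for $k\ge 2$, which justifies stating the uniform $(\log n)^{4/9}$ factor.
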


\section*{Acknowledgments}
We greatly benefited during the course of this work from comments and conversations with
Paul Beame, Harry Buhrman, Mike Saks, Gideon Schechtman, Nate Segerlind, Sasha Sherstov,
Robert \v{S}palek, Emanuele Viola, Fengming Wang, Avi Wigderson, and Ronald de Wolf.  
We would also like to thank the anonymous referees for their many suggestions.


\end{document}